\renewcommand{\epsilon}{\varepsilon}
\newcommand{\eps}{\ensuremath{\varepsilon}}
\newcommand{\calA}{\ensuremath{\mathcal{A}}}
\newcommand{\calB}{\ensuremath{\mathcal{B}}}
\newcommand{\calD}{\ensuremath{\mathcal{D}}}
\newcommand{\calS}{\ensuremath{\mathcal{S}}}
\renewcommand{\Pr}{\mathop{\mathbf{Pr}}}
\newtheorem{lem}{Lemma}[section]
\newtheorem{thm}[lem]{Theorem}
\newtheorem{definition}[lem]{Definition}
\newcommand{\bracket}[1]{\left(#1\right)}
\newcommand{\vast}{\bBigg@{4}}
\newcommand{\Vast}{\bBigg@{5}}
\newcommand{\ltwo}[1]{\left\|#1\right\|_2}
\tikzset{%
  every neuron/.style={
    circle,
    draw,
    minimum size=1cm
  },
  neuron missing/.style={
    draw=none, 
    scale=4,
    text height=0.333cm,
    execute at begin node=\color{black}$\vdots$
  },
}
  \providecommand\BibTeX{{%
    \normalfont B\kern-0.5em{\scshape i\kern-0.25em b}\kern-0.8em\TeX}}}
\begin{document}
\fancyhead{}
\title{Network Shuffling: Privacy Amplification via Random Walks}

\author{Seng Pei Liew}
\email{sengpei.liew@linecorp.com}
\orcid{0000-0003-2419-2505}
\affiliation{%
  \institution{LINE Corporation}
  \city{Tokyo}
  \country{Japan}
}
\author{Tsubasa Takahashi}
\email{tsubasa.takahashi@linecorp.com}
\orcid{0000-0002-0646-0222}
\affiliation{%
  \institution{LINE Corporation}
  \city{Tokyo}
  \country{Japan}
}
\author{Shun Takagi}
\email{takagi.shun.45a@st.kyoto-u.ac.jp}
\orcid{0000-0001-7732-2807}
\affiliation{%
  \institution{Kyoto University}
  \city{Kyoto}
  \country{Japan}
}
\author{Fumiyuki Kato}
\orcid{0000-0001-9276-4232}
\email{fumiyuki@db.soc.i.kyoto-u.ac.jp}
\affiliation{%
  \institution{Kyoto University}
  \city{Kyoto}
  \country{Japan}
}
\author{Yang Cao}
\email{yang@i.kyoto-u.ac.jp}
\orcid{0000-0002-6424-8633}
\affiliation{%
  \institution{Kyoto University}
  \city{Kyoto}
  \country{Japan}
}
\author{Masatoshi Yoshikawa}
\email{yoshikawa@i.kyoto-u.ac.jp}
\orcid{0000-0002-1176-700X}
\affiliation{%
  \institution{Kyoto University}
  \city{Kyoto}
  \country{Japan}
}

\begin{abstract}
Recently, it is shown that shuffling can amplify the central differential privacy guarantees of data randomized with local differential privacy.
Within this setup, a centralized, trusted shuffler is responsible for shuffling by keeping the identities of data anonymous, which subsequently leads to stronger privacy guarantees for systems.
However, introducing a centralized entity to the originally local privacy model loses some appeals of not having any centralized entity as in local differential privacy.
Moreover, implementing a shuffler in a reliable way is not trivial due to known security issues and/or requirements of advanced hardware or secure computation technology. 

Motivated by these practical considerations, we rethink the shuffle model to relax the assumption of requiring a centralized, trusted shuffler.
We introduce network shuffling, a decentralized mechanism where users exchange data in a random-walk fashion on a network/graph, as an alternative of achieving privacy amplification via anonymity.
We analyze the threat model under such a setting, and propose distributed protocols of network shuffling that is straightforward to implement in practice.
Furthermore,  we show that the privacy amplification rate is similar to other privacy amplification techniques such as uniform shuffling.
To our best knowledge, among the recently studied  intermediate trust models that leverage privacy amplification techniques, our work is the first that is not relying on any centralized entity to achieve privacy amplification. 
\end{abstract}

\ccsdesc[500]{Security and privacy~Privacy protections}
\ccsdesc[300]{Security and privacy~Data anonymization and sanitization}

\keywords{Differential privacy; privacy amplification; random walk on graphs; distributed computing protocols}

\maketitle

\begin{figure}
\begin{center}
    \includegraphics[width=0.45\textwidth]{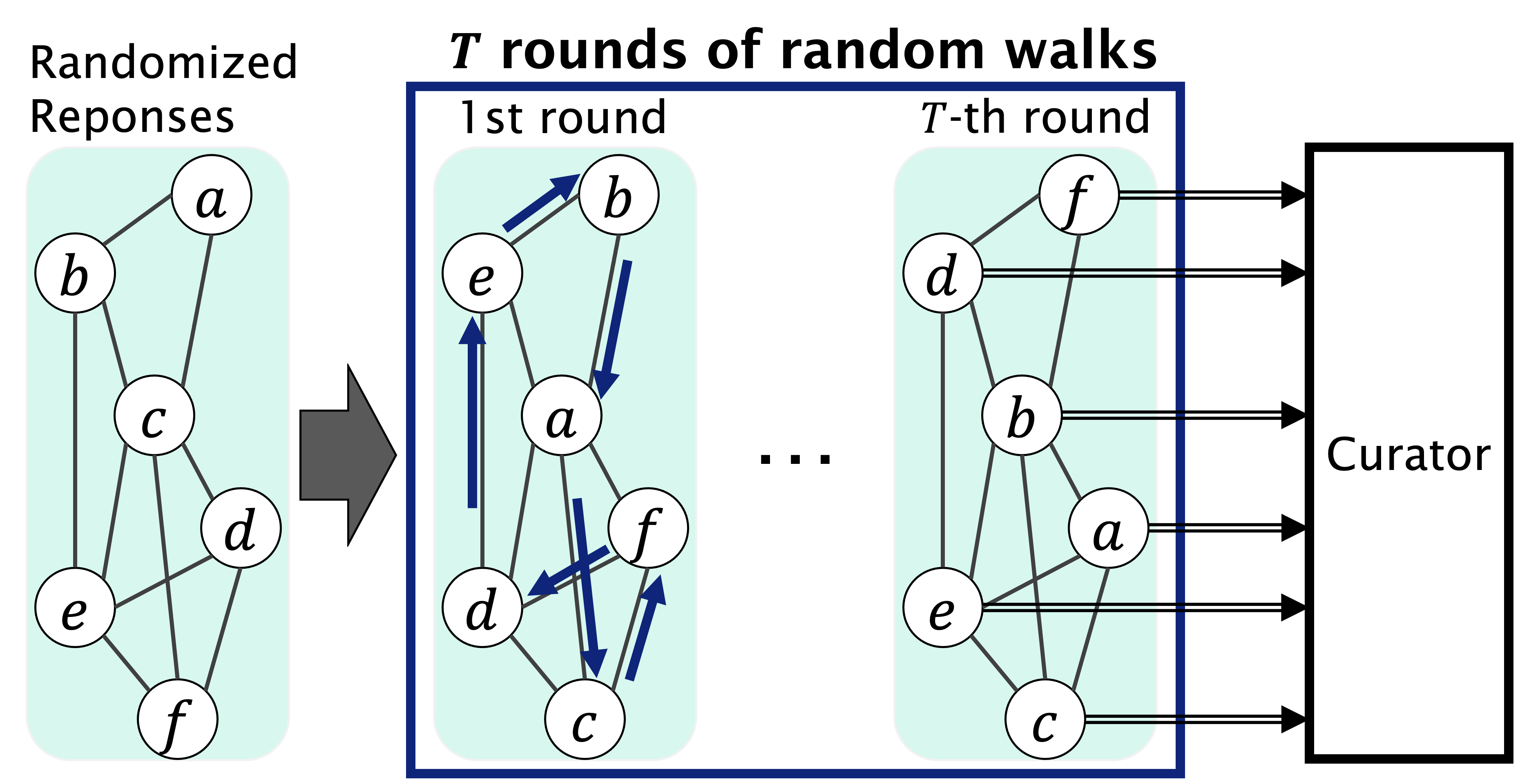}
\caption{Network shuffling is a decentralized mechanism of privacy amplification via multi-round random walks on a graph. In each round, every client relays her randomized reports to one of her neighbors (e.g., friends on a social network) via an encrypted channel. Network shuffling does not require any additional trusted entities unlike existing shuffling mechanisms.}
\label{fig:teaser}
\end{center}
\end{figure}

\section{Introduction}
Modern technology companies often gather data from large population of users/clients to improve their services.
Going beyond collecting data, more recently, users are asked to perform certain computation and send the reports back to the server, i.e., participating in federated learning or federated optimization \cite{DBLP:conf/aistats/McMahanMRHA17,DBLP:journals/ftml/KairouzMABBBBCC21,DBLP:journals/corr/abs-2107-06917}.
The reports/data involved can be sensitive in such distributed systems, and in order to protect the privacy of users, formal privacy guarantees are sought after.
For this purpose, differential privacy (DP) \cite{DBLP:conf/tcc/DworkMNS06,dwork2006our}, which is widely regarded as a gold-standard notion of privacy, has seen adoption by companies such as Google \cite{DBLP:conf/ccs/ErlingssonPK14}, Apple \cite{apple2017learning}, Linkedin \cite{DBLP:journals/corr/abs-2002-05839} and Microsoft \cite{DBLP:conf/nips/DingKY17}. 

Early academic studies of DP assume the existence of a centralized, trusted curator.
The centralized curator has access to raw data from all users, and is responsible for releasing the aggregated results with DP guarantees.
Such a central model requires users to trust the curator for handling their data securely.
However, many applications such as federated learning, or regulations like General Data Protection Regulation (GDPR) require different assumptions about the availability of such an entity. 
High-profile data breaches reported lately \cite{breach2020} have also given service providers second thought at collecting data in a centralized manner to avoid such risks. 

Alternative trust models without a centralized, trusted curator have been proposed.
Among these models, the model of local differential privacy (LDP) provides the strongest guarantees of privacy  \cite{DBLP:conf/pods/EvfimievskiGS03,DBLP:journals/siamcomp/KasiviswanathanLNRS11}: user does not have to trust any other entity except herself.
This is achieved by randomizing her data by herself using techniques such as randomized response to achieve LDP before aggregating the data.
However, LDP is known to be suffering from significant utility loss.
For example, the error of real summation with $n$ users under LDP is $\sqrt{n}$ larger than that of the central model \cite{DBLP:conf/esa/ChanSS12}.

Intermediate trust models between the local and central models are also considered in the literature.
These models aim to obtain better utility under more practical privacy assumptions \cite{DBLP:conf/stoc/DworkNPR10,Cheu_2019,soda-shuffling, DBLP:conf/sigmod/ChowdhuryW0MJ20}.
The pan-private model \cite{DBLP:conf/stoc/DworkNPR10} assumes that the curator is trustworthy for now, and store the distorted values to protect against future adversaries.
More recently, the shuffle model \cite{DBLP:conf/sosp/BittauEMMRLRKTS17} has attracted attention.
Within this model, it is assumed that there exists a centralized, trusted ``shuffler''. Users first randomizes their data before sending them to the shuffler, which executes permutation on the set of data to keep the identities of data anonymous.
Finally, the set of data is sent to the curator, on which no trust assumption is put.
Privacy amplification due to anonymity is shown to be achievable in this model, where it is enough for each user to apply a relatively small amount of randomizations/noises to the data to achieve strong overall DP guarantees \cite{Cheu_2019,soda-shuffling}.
Various aspects of shuffling have been studied in the literature \cite{DBLP:conf/crypto/BalleBGN19,DBLP:conf/icits/BalcerC20,DBLP:conf/aistats/GirgisDDKS21,DBLP:conf/aaai/Liu0CGY21,DBLP:conf/icml/Ghazi0MPS21,clones}.

Shuffle models provide better privacy-utility trade-offs, but they ultimately rely on a centralized entity; reintroducing such an entity seems to surrender the original benefits of not depending on any centralized entity in the local model, some of which have been discussed above (GDPR constraints, data breach risks). 
Furthermore, achieving anonymity via a centralized entity is not trivial in practice.
One realization of shuffler, Prochlo \cite{DBLP:conf/sosp/BittauEMMRLRKTS17}, requires, at its core, the use of a Trusted Executed Environment (TEE) such as Intel Software Guard Extension (SGX).
However, SGX is known to be vulnerable to side-channel attacks \cite{DBLP:journals/corr/abs-2006-13598}.
Another known realization is via a centralized set of mix relays, or mix-nets \cite{DBLP:journals/cacm/Chaum81,Cheu_2019}, but mix-nets remain vulnerable to individual relay failures and other issues \cite{DBLP:journals/comcom/RenW10}.

\subsection{Our Contributions}
In order to overcome the aforementioned weaknesses, we propose \textit{network shuffling}, a mechanism that achieves the effect of privacy amplification without requiring any centralized and trusted entity, in contrast to previous studies of privacy amplification where the existence of such an entity is assumed by default.

Essentially, within our framework, users exchange their reports in a random, secret, and peer-to-peer manner on a graph for multiple rounds. 
The random and secret exchange is vital at achieving anonymity of the reports: all users are potential original holders of each report after multiple exchanges.
The exchanged reports are subsequently sent to the curator. 
When this collection of reports is viewed in the central model, the privacy guarantees are enhanced, yielding privacy amplification via anonymity.
A simple illustration of network shuffling is shown in Figure \ref{fig:teaser}.

Network shuffling is inherently a decentralized approach.
By comparing it with existing centralized approaches, we show that network shuffling has several favorable properties in terms of security and practicality.
Concrete distributed protocols are also proposed to realize network shuffling.
Furthermore, to perform a formal privacy analysis of network shuffling, we model it as a random walk on graphs.
Harnessing graph theory, we give analytical results of privacy amplification, which have interesting relations with the underlying graph structure. 
Table \ref{tab:compare} compares the privacy amplification result with other techniques.
It can be seen that a privacy amplification of $O(1/\sqrt{n})$ similar to other techniques can be attained with network shuffling.
It is also the first general technique that achieves privacy amplification without any centralized entity as far as we know.

\begin{table}[t]
    \centering
    \begin{tabular}{ll}
    \toprule
    Mechanism & Privacy Amplification \\
    \midrule
    No amplification \cite{duchi2013local} & $\eps_0$\\
    Uniform subsampling \cite{DBLP:journals/siamcomp/KasiviswanathanLNRS11,DP_SGD} & $O(e^{\eps_0}/\sqrt{n})$\\
    Uniform shuffling \cite{soda-shuffling}& $O(e^{3\eps_0}/\sqrt{n})$\\
    Uniform shuffling (w/ clones) \cite{clones} & $O(e^{0.5\eps_0}/\sqrt{n})$\\ 
    \textbf{Network shuffling (ours)} & $O(e^{1.5\eps_0}/\sqrt{n})$\\
    \bottomrule
    \end{tabular}
    \caption{Comparisons of different privacy amplification mechanisms\protect\footnotemark. For network shuffling, see Theorem \ref{thm:all_protocol} for details.}
    \label{tab:compare}
\end{table}
\footnotetext{Here, the dependence on other factors such as $\delta$ is suppressed.}

We note that our main purpose is to show, in an application and system-agnostic manner, that privacy amplification in a decentralized and peer-to-peer manner is viable through network shuffling.
To make our arguments applicable in general, we necessarily make idealistic assumptions on the realization of the network shuffling mechanism. 
Some of the assumptions include non-colluding users and fault-tolerant communication.  
Such limitations and possible solutions will also be discussed in this work in later sections. 
Our contributions are summarized as follows:
\begin{itemize}
    \item We propose and motivate network shuffling, a practical, simple yet effective mechanism of privacy amplification that relaxes the assumption of requiring a centralized, trusted entity.
    We compare network shuffling with existing approaches to demonstrate its advantages (Section \ref{sec:motif}). 
    \item We formalize network shuffling as a random walk on graphs and propose minimal designs and distributed protocols that can be implemented in a simple and practical fashion (Section \ref{sec:proto}).
    \item Detailed privacy analysis and empirical evaluation are presented to show that the rate of privacy amplification is similar to that of uniform shuffling and uniform subsampling (Section \ref{sec:analysis} and \ref{sec:proofs}).
\end{itemize}

\section{Preliminaries and notations}

This section gives essential terminology, definitions, and theorems related to differential privacy, as well as relevant privacy amplification techniques for understanding our proposals.
Table \ref{tab:notations} gives some notations frequently used in this work.

\noindent
\textbf{Terminology.}
The following terminology is also used.
A \textit{user} or a \textit{node} refers to an entity in the system which holds a piece of information, referred to as \textit{report} or \textit{data}.
A user may send or receive reports from other users, of which the process is referred to as \textit{exchange} or \textit{relay}.
The exchange typically takes place for multiple \textit{rounds} or \textit{time steps}.
Users exchange reports on a \textit{path} or \textit{channel}, and two users are said to be \textit{connected} if they are able to exchange reports with each other on the path.
The system of users and paths forms a \textit{graph} or \textit{communication network}, or \textit{network} in short.
A \textit{curator} or \textit{server} will eventually collect all reports from the users.
At times we abstract away from how network shuffling is processed in practice and assume that the report anonymization procedure satisfies several minimal requirements, particularly when discussing its theoretical properties.
When we are discussing more practical details, such as protocols to achieve the requirements of report anonymization, we call them the \textit{implementation} or \textit{realization} of network shuffling.
Last but not least, we use ``log" to represent the natural logarithm.
\subsection{Differential Privacy}

\begin{definition}[($\epsilon$, $\delta$)-Differential Privacy]
Given privacy parameters $\epsilon \geq 0$ and $\delta \geq 0$, a randomized mechanism, $\mathcal{M}: \mathcal{D} \rightarrow \mathcal{S}$ with domain $\mathcal{D}$ and range $\mathcal{S}$ %
satisfies ($\epsilon$, $\delta$)-differential privacy (DP) if for any two adjacent inputs $D, D' \in \mathcal{D}$ and for any subset of outputs $S \subseteq \mathcal{S}$, the following holds:
\begin{align}
\Pr[\mathcal{M}(D) \in S] \leq e^\epsilon \cdot \Pr[\mathcal{M}(D') \in S] + \delta.
\end{align}
\end{definition}
Here, the notion ``adjacent'' is application-dependent.
For $D$ with $n$ elements, it refers to $D$ and $D'$ differing in one element.
For such cases, w.l.o.g., we also say that the mechanism is DP at index 1 in the central model.
$\delta$ is assumed to be smaller than $1/n$ and $\eps$ is assumed to be $\lesssim 1$ to provide meaningful central DP guarantees.
We also say that a $(\eps,\delta)$-DP mechanism satisfies \textit{approximate} DP, or $(\eps,\delta)$ indistinguishable when $\delta > 0$.

When $D$ consists of single element, $\mathcal{M}$ is also called a \textit{local randomizer}, which provides local DP guarantees. The formal definition is given below.
\begin{definition}[Local randomizer]
A mechanism $\mathcal{A}: \mathcal{D} \rightarrow \mathcal{R}$ is a $(\eps,\delta)$-DP local randomizer if for all pairs $x,x'\in \mathcal{D}$, $\mathcal{A}(x)$ and $\mathcal{A}(x')$ are $(\eps,\delta)$ indistinguishable.
\end{definition}

As $n$ elements satisfying $\eps$-LDP are naturally $\eps$-DP in the central model,
\textit{privacy amplification} is said to occur when the central model is $\eps'$-DP with $\eps' < \eps$.

\begin{table}
\begin{center}
\small
\begin{tabular}{ll}
        \toprule
        Symbol & Description \\
        \midrule
        $n$ & Total number of users\\
         $x_i$ & Report generated by user $i$\\
       $\calA_{ldp}$   & Local randomizer\\
        $\eps_0$  & LDP guarantee of local randomizer \\
                 $s_i$ & $i$-th randomized report\\
         $\calS^{(i)}$ & Domain of $i$-th randomized report \\
         $P^G_i$ & Probability distribution of user $i$ holding a report on a graph $G$ \\
         & (Position probability distribution)\\
         $\Gamma^G$&$\sum \limits_{i\in[n]}{n P^G_i}^2$, irregularity measure of graph $G$\\
        \bottomrule
    \end{tabular}
    \caption{Notations.}
    \label{tab:notations}
\end{center}
\end{table}

\subsection{Privacy Amplification Techniques}

We here introduce the shuffle model and other techniques of privacy amplification.

The shuffle model is a distributed model of computation, where there are $n$ users each holding report $x_i$ for $i \in [n]$, and a server receiving reports from users for analysis.
The report from each user is first sent to a shuffler, where a random permutation is applied to all the reports for anonymization. This procedure is also known as \textit{uniform shuffling} in the literature. 

Other mechanisms of privacy amplification have also been considered, such as privacy amplification by subsampling \cite{DBLP:journals/siamcomp/KasiviswanathanLNRS11}, which is utilized in federated learning \cite{DBLP:conf/iclr/McMahanRT018}.
However, it is necessary to put trust on the server to hide the identities of subsampled users to establish privacy amplification.

A technique called random check-in is introduced in \cite{DBLP:conf/nips/BalleKMTT20}, where a more practical distributed protocol within the federated learning setting is studied.
There, a centralized and trusted orchestrating server is still required to perform the ``check-ins'' of the users, which loses the appeal of our decentralized approach. 

Privacy amplification by decentralization has also been proposed \cite{DBLP:journals/corr/abs-2012-05326}. 
Deterministic (non-random) walking on graphs is considered there, and it is assumed that the central adversary does not have a local view of data (that is, the central adversary can only access the aggregated quantities).
In our work, we instead utilize random walks on graphs to achieve privacy amplification while maintaining a local view of data from the central adversary's perspective.
Due to privacy model's differences, \cite{DBLP:journals/corr/abs-2012-05326} is largely orthogonal to our work.

\section{Shuffling without a centralized, trusted shuffler}
\label{sec:motif}
We motivate the need for network shuffling by first discussing the properties of the current implementations of centralized shuffler.
Then, we discuss the general properties of network shuffling before closing the section by providing the threat model and assumptions of our proposal.

\subsection{Motivations for Network Shuffling}
Our work aims to resolve the shortcomings faced by existing realizations of the shuffle model, particularly Prochlo and mix-nets.
Let us first discuss in more detail the inadequate properties of the TEE-based Prochlo framework \cite{DBLP:conf/sosp/BittauEMMRLRKTS17}.
First, as mentioned above, it is difficult to avoid all types of side-channel attacks against TEEs \cite{DBLP:journals/corr/SchwarzWGMM17,DBLP:conf/uss/BulckMWGKPSWYS18,DBLP:journals/corr/abs-2006-13598}.
Second, Prochlo must first collect and batch reports from all users before performing shuffling concurrently.
This means that Prochlo is not scalable user number-wise.

Another common way of achieving anonymity is through the utilization of mix-nets \cite{DBLP:journals/cacm/Chaum81,DBLP:journals/popets/KwonLDF16}. 
Essentially, reports are relayed through a centralized set of mix relays (without batching) to avoid single point of failure. 
However, an adversary still can monitor the traffic to and from the mix-nets to determine the source of (un-batched) reports, thus breaking anonymity.
\textit{Cover traffic} (blending authentic messages with noises to counter traffic analysis) can alleviate the risk, but as will be discussed later in this section, network shuffling is more efficient than mix-nets traffic complexity-wise in this respect.
Moreover, the individual mix relays in turn become the obvious targets of attack and compromise \cite{DBLP:journals/comcom/RenW10,DBLP:conf/ccs/FreedmanM02,DBLP:conf/wpes/RennhardP02}.

The existence of a malicious insider located at the centralized shuffler (insider threat) is yet another potential risk that compromises the anonymity of reports.
Generally, introducing a centralized entity simply increases the attack surface.
These constraints, along with several other practical considerations discussed in Introduction, motivate us to consider alternatives which relax the assumption of the availability a centralized shuffler, but are still able to reap the benefits of shuffling, i.e., privacy amplification.

\noindent \textbf{Network Shuffling.}
Uniform shuffling is, simply put, a mechanism that mixes up reports from distinct users to break the link between the user and her report.
This is assumed in the literature to be executed by a centralized ``shuffler''.
The core idea of our proposal that removes the requirement of any centralized and trusted entity is simple: exchanging data among the users randomly to achieve the shuffling/privacy amplification effect, assuming that the users can communicate with the server and each other on a communication network.
We denote this mechanism by \textit{network shuffling}.

Initially, each user randomizes her own report using the local randomizer.
Then, the user sends out the report randomly to other connected  users while receiving incoming reports also from the connected users.
The above procedure is iterated for a pre-determined rounds before sending the reports to the curator. This procedure of exchanging reports essentially achieves the effect of hiding the origin of the report like what a centralized shuffler does.

Network shuffling is applicable to any group of users able to form a communication network to exchange reports with each other.
For example, it can be applied to users of messaging applications, where the report-exchanging network is naturally defined by the social network. 
It can also be applied straightforwardly to wireless sensor networks \cite{DBLP:conf/ical/ZhangZ12}, as well as the Internet Protocol (IP) network overlay which is general-purpose and application-transparent \cite{DBLP:conf/iptps/FreedmanSCM02,DBLP:conf/ccs/FreedmanM02,DBLP:conf/wpes/RennhardP02}.

\subsection{Complexity Analyses}
There are a few basic properties of network shuffling that can be inferred without specifying its detailed realizations.
First, in terms of memory, as each user exchanges reports without requiring to store them, the amount of memory or space taken is constant with respect to $n$ ($O(1)$).
This is in contrast to Prochlo ($O(n)$).
Mix-nets also have memory complexity of $O(1)$ at best, if no batching of data is made to the relay traffic.

Prochlo requires each user to send the report once, leading to a traffic complexity of $O(1)$ per user.
Typically, mix-nets utilize cover traffic to defend against traffic analysis such that the adversary cannot distinguish whether a report is genuine or merely noise.
To explicitly cover all of $n$ users, mix-nets must send cover traffic to all users, leading to a traffic complexity of $O(n)$ per user. 
In contrast, each user exchanges traffic only with her neighbors in network shuffling protocols when using cover traffic, and the number of neighbors is typically much smaller than $n$.
Some protocols require relaying $O(\log n)$ reports (ignoring other contributing factors; see Section \ref{subsec:setup}), amounting to a traffic complexity of $O(\log n)$, or $O(1)$ per user if additional relays depending on $n$ are not required.

Finally, we note that as TEE is limited in terms of memory, shuffling is processed in batches of reports, requiring multiple rounds of processing \cite{DBLP:conf/sosp/BittauEMMRLRKTS17}.
On the other hand, mix-nets and network shuffling at minimum require only simple encryption-decryption mechanisms (see Section \ref{subsec:comm}).
Overall, although requiring at most $O(\log n)$ of traffic overhead, network shuffling has minimal memory and processing overhead consumption.
The dependence of both memory and traffic complexities are summarized in Table \ref{tab:comp}.

We would like to emphasize that the focus of this paper is on the general, application or system-agnostic study of privacy amplification with network shuffling.
A full-fledged implementation and system analysis would be specific to the underlying application or system, and is therefore out of the scope of this work. \footnote{See \cite{DBLP:conf/iptps/FreedmanSCM02,DBLP:conf/ccs/FreedmanM02,DBLP:conf/wpes/RennhardP02} for an implementation and analysis of a similarly decentralized and peer-to-peer anonymization system that is specific to IP network overlay.} 

Nevertheless, for the rest of this section and next, we describe a minimal and self-contained implementation of network shuffling, along with some idealistic assumptions to abstract away from application or system-specific details, but is still realistic enough to be implemented.
We hope that this will help practitioners understand the minimal requirements to achieve privacy amplification via network shuffling, which is, as far as we know, the first decentralized approach of privacy amplification in the literature. 
Furthermore, we discuss in detail in later sections where there are rooms for refinements in our descriptions for real-world deployment. 

We begin by elaborating on our threat model which forms the basis of the network shuffling protocols provided in Section \ref{sec:proto}.

\begin{table}
\begin{center}
\small
    \begin{tabular}{cccc}
        \toprule
        
        Complexity & Prochlo& Mix-nets& \textbf{Ours} \\
                \midrule
       Entity space complexity& $O(n)$& $O(1)$& $O(1)$ \\
       User traffic complexity& $O(1)$& $O(n)$& $O(\log n)/O(1)$ \\
        \bottomrule
    \end{tabular}
    \caption{Complexity comparisons between our proposal, network shuffling and other centralized approaches.
    Entity space complexity refers to the amount of memory space required to process shuffling by, the shuffler (Prochlo), and the user (mix-nets and network shuffling).
        User traffic complexity refers to the number of report exchanged/sent by each user.
    Dependence on factors other than the number of users, $n$, is suppressed.}
    \label{tab:comp}
\end{center}
\end{table}

\subsection{Our Threat Model and Assumptions}
\label{subsec:threat}
As all users apply an $\eps_0$-DP local randomizer to the report before sending it out, they are guaranteed with local DP quantified by $\eps_0$. 
This is true even when all other parties (including other users) collaborate to attack any specific user.
This forms the worst-case privacy guarantees of network shuffling.

Our privacy analysis in the following sections is however mainly against a central analyzer. 
In order to establish central DP, our threat model requires additional assumptions.

\noindent\textbf{Non-collusion}. We assume that there is no collusion among users, that is, users do not collude against certain victim user.
Note that this assumption is also required by the shuffle model when privacy amplification via shuffling is considered \cite{DBLP:journals/pvldb/WangXDZHHLJ20}.

\noindent\textbf{Honest-but-curious users}.
All users are assumed to be honest but can be curious.
That is, users will not deviate from the protocol but can try to retrieve information from the received reports.
We will demonstrate a communication protocol in Section \ref{subsec:comm} to verify that this is achievable in practice.
Besides, we assume that all users are available to participate in relaying the reports at each round.
Relaxation of these assumptions will be discussed in Section \ref{subsec:prac}.

\noindent\textbf{No traffic analysis}.
For simplicity, we further assume that it is not possible for the adversary to perform timing or traffic analysis.
Achieving this in practice may require users to send more reports than necessary to cover the traffic as described earlier, or hide the trails by sending the report along with other information. 
Our threat model abstracts away from these considerations.
Nevertheless, we remark the final-round reports are not anonymous: the central adversary is able to link the report to the user sending it at the final round of network shuffling (but not the ``original" owners of the reports).
This is realistic when considering communication protocols in practice, such as those given in Section \ref{sec:proto}. 

When the above assumptions on the central adversary fail to hold, the privacy guarantees degrade at worst to $\eps_0$, privacy guarantees in the LDP setting. 
We also note that our threat model is reasonable compared to shuffle model's.
That is, for both cases, when all users except the victim collude with the server, or when traffic analysis is possible, the privacy guarantees drop to the LDP setting.
Within the uniform shuffling framework, the server may additionally collude with the shuffler.
On the other hand, network shuffling offers an alternate privacy amplification solution without this attack surface.

\section{Protocols of Network Shuffling}
\label{sec:proto}
In order to perform privacy analysis of network shuffling, we model the report exchanges as random walks on graphs. 
In the following, we first provide notions of random walks, focusing on relating them to applications in network shuffling.
We will also quote well-known results from graph theory, particularly those related to the study of random walks on graphs.
See \cite{Lovasz1996} for a relevant comprehensive survey of random walks on graphs.
Then, we describe the distributed protocols of network shuffling.
\subsection{Random Walks on Graphs}
\label{sec:rdmgrph}

A graph, $G = (V,E)$, is characterized by a set of nodes or vertices, $V$, and a set of paired nodes, $E$.
In our case, $G$ is the communication network, and users may be viewed as nodes and $E$ represents the set of communication paths between users.
In this work, we consider undirected graph, that is, each pair of connected users can send messages to each other.

A particular graph topology is of interest: $k$\textit{-regular graph}. 
A $k$-regular graph is a graph in which each node has the same number, $k$, of connected neighbors.

We define the \textit{adjacency matrix} of $G$ as $A\in \mathbb{R}^{n\times n}$, where $n$ is the number of nodes in $G$. $A_{ij}$ takes values from $\{0,1\}$, indicating whether the nodes $i$ and $j$ are connected. Then, the probability of user $i$ sending report to a randomly chosen recipient $j$ is characterized by the transition probability from node $i$ to node $j$ on the graph, 
$$M_{ij} =\frac{A_{ij}}{\sum_{j\in V}A_{ij}}.$$

Denote $B$, which represents a diagonal matrix, by $B_{ii}=\sum_{j\in V}A_{ij}$.
Then, we can rewrite the \textit{transition probability matrix} $M$ of graph $G$ as $M=AB^{-1}$.
We write $P(t) \in \mathbb{R}^n$ as the probability distribution of the $n$ users holding a report at time $t$ (we also refer to it as a \textit{position probability distribution}).
Then, the update of probability distribution due to random exchange (random walk) of report at each time step may be expressed recursively as $P(t+1)=M^{T}P(t).$
Given a certain initial probability distribution, $P(0)$, we can calculate the probability distribution after $t$ time steps, $P(t)$ as $P(t)=(M^{T})^{t}P(0)$.
At times, we abbreviate $P^G(t)$ of graph $G$ as $P^G$, and its $i$-th component as $P^G_i$ for notational convenience. 

We are also interested in the probability distribution in the long run. \textit{Stationary distribution}, $\pi^G$ characterizes such a behavior:
\begin{definition}[Stationary distribution]
A probability distribution $\pi^G$ over a set of nodes $V$ on a graph $G=(V,E)$ is a stationary distribution of the random walk when $\pi^G = M^T\pi^G$.
\end{definition}
When a random walk converges to a stationary distribution eventually, we say that the walk is \textit{ergodic}:
\begin{definition}
A random walk is ergodic when for all initial probability distribution $P_0$ over a set of nodes $V$ on a graph $G=(V,E)$, $P_0$ converges to a stationary distribution as $t\to \infty$.
\end{definition}
The following theorem describes conditions for guaranteeing ergodicity of a random walk on a graph :
\begin{thm}
A random walk on a graph $G$ is ergodic if and only if $G$ is connected and not bipartite.
\end{thm}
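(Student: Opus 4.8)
The plan is to reduce the statement to the classical convergence theorem for finite Markov chains by translating the two graph-theoretic conditions into the standard Markov-chain dictionary. The random walk on $G$ is the finite-state chain with (column-stochastic) transition matrix $M = AB^{-1}$ and update rule $P(t+1) = M^T P(t)$. I would first record two correspondences. (i) $G$ is connected $\iff$ the chain is irreducible: a positive-probability length-$t$ transition from $i$ to $j$ exists exactly when there is a walk of length $t$ from $i$ to $j$ in $G$, so $((M^T)^t)_{ji} = (M^t)_{ij} > 0$ for some $t$ for every ordered pair iff $G$ is connected. (ii) $G$ is non-bipartite $\iff$ the chain is aperiodic: in a connected graph all states share a common period equal to the $\gcd$ of the lengths of all closed walks through a fixed vertex; in a bipartite graph every closed walk has even length, so the period is $2$ (periodic), while a connected non-bipartite graph contains an odd closed walk of some length $2m+1$ together with the length-$2$ closed walk $i \to j \to i$ along an incident edge, and $\gcd(2m+1,2)=1$, so the period is $1$.

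For the ``if'' direction, assume $G$ is connected and non-bipartite. A connected graph on at least two vertices has minimum degree at least $1$, so $M$ is well defined and stochastic. By (i) and (ii) the chain is irreducible and aperiodic, hence some power $M^r$ is entrywise positive (a primitive matrix). I would then invoke the Perron--Frobenius theorem / fundamental theorem of finite Markov chains (quoted from the references on random walks): there is a unique stationary distribution $\pi^G$ with $\pi^G = M^T \pi^G$, and $(M^T)^t P_0 \to \pi^G$ for every initial distribution $P_0$. This is precisely ergodicity.

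For the ``only if'' direction I would argue by contraposition, exhibiting for each failure mode an initial distribution whose iterates do not converge to the stationary distribution. If $G$ is bipartite with parts $L, R$, take $P_0$ supported entirely on $L$; since every edge runs between $L$ and $R$, the total mass of $P(t)$ on $L$ equals $1$ for even $t$ and $0$ for odd $t$, so $P(t)$ has two distinct subsequential limits and does not converge. If $G$ is disconnected, there is more than one stationary distribution (e.g.\ the degree-proportional distributions supported on each component), and starting from $P_0 = e_v$ in one component the iterates remain supported on that component, so the limit depends on $P_0$ and there is no single stationary distribution to which all initial distributions converge. In either case the walk fails to be ergodic.

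The routine ingredients are the Markov-chain convergence theorem (cited) and the stochasticity/minimum-degree bookkeeping. The one step that needs genuine, if short, care is correspondence (ii)---specifically the implication ``non-bipartite $\Rightarrow$ aperiodic,'' which rests on the number-theoretic fact that the $\gcd$ of all closed-walk lengths through a vertex drops to $1$ once a single odd closed walk exists, and this in turn uses connectedness to transport an odd cycle into a closed walk through the chosen vertex. I expect this to be the main (minor) obstacle; the remainder is dictionary translation plus one citation.
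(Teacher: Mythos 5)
Your proof is correct and is the standard Markov-chain argument; the paper itself states this theorem without proof, quoting it as a known fact from the random-walks-on-graphs literature (the Lov\'asz survey), so there is no in-paper argument to compare against, and your dictionary (connected $\Leftrightarrow$ irreducible, non-bipartite $\Leftrightarrow$ aperiodic, then Perron--Frobenius for the forward direction and explicit non-convergent or component-trapped initial distributions for the converse) is exactly what that reference supplies. One small caveat on the ``only if'' direction: the paper's Definition of ergodicity literally reads ``converges to \emph{a} stationary distribution,'' under which a disconnected graph whose components are each connected and non-bipartite would still qualify (every $P_0$ converges to \emph{some} stationary distribution, namely a mixture of the component-wise ones weighted by the initial mass); your argument for the disconnected case tacitly uses the standard convention that the limit must be a single stationary distribution independent of $P_0$, which is clearly the intended reading, but it is worth stating that convention explicitly since the equivalence fails under the literal wording.
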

Let $\mathbf{k}=(k(1), \dots, k(n))$ be the number of edges connected to each node, and $m$ be the total number of edges for an ergodic graph $G$.
It can further be shown that $\mathbf{k}/2m$ is a stationary distribution.
A regular graph's stationary distribution is therefore $\mathbf{1}/n$.

One is also interested in the \textit{mixing time}, the number of time steps it takes for a probability distribution to be close to the stationary distribution.
To estimate the mixing time, we first consider a $k$-regular graph and introduce several more results from spectral graph theory. \footnote{We mainly follow the arguments given in \cite{graphlecture}.
More details can be found there and in the references therein.}
First, the transition probability matrix $M=\frac{1}{k}A$ (equivalent to $B^{-1/2}AB^{-1/2}$, known as the normalized adjacency matrix) is characterized by its eigenvalues, denoted by $1 = \alpha_1 \geq \alpha_2 \geq \dots \alpha_n > -1$ and its corresponding orthonormal eigenvectors, $e_1, e_2, \dots e_n$.
It can be shown that $\alpha_2 < 1$ and $e_1 = \mathbf{1}/\sqrt{n}$.
Since $e_i$ for $i \in [n]$ is orthonomal in $\mathbb{R}^n$, we can write any initial distribution as $P_0 = \sum_{i=1}^n c_i e_i$ with $\sum_{i=1}^nc_i=1$.
Then, 
\begin{equation*}
P(t) = (M^T)^t P(0) =  \sum_{i=1}^n (M^T)^t c_i e_i  = \sum_{i=1}^n \alpha^t_i c_i e_i,
\end{equation*}
and $\lim_{t\to \infty} \sum_{i=1}^n \alpha^t_i c_i x_i = c_1\alpha_1 e_1$ using $\alpha_2 < 1$.
Since $\alpha_1=1$, $c_1=1/\sqrt{n}$ as $e_1 = \mathbf{1}/\sqrt{n}$.

We also quantify the concept of walk convergence as follows.
\begin{definition}
The graph total variation distance between two distributions $P$ and $Q$ on a graph $G$ is defined as:
\begin{equation}
\label{def:tv}
    TV_G(P,Q) \coloneqq \sum_{i=1}^n |P_i - Q_i| = \left\| P-Q \right\|_1.
\end{equation}
\end{definition}
Recall that the distribution at time $t$ is $P(t) = (M^T)^t P(0)$.
Let $\alpha \coloneqq \text{min} (1-\alpha_2, 1-|\alpha_n|)$ be the spectral gap.
It can be shown that:
\begin{align}
    TV_G(P(t)^G,\pi^G) &= \left\|(M^T)^t P(0) - \mathbf{1}/n\right\|_1 \nonumber \\
&= \left\|\sum_{i=2}^n c_i \alpha_i^t e_i\right\|_1 \leq \sqrt{n} \left\|\sum_{i=2}^n c_i \alpha_i^t e_i\right\|_2
\end{align}
using Cauchy-Schwartz inequality. Note that 
\begin{align}
\label{eq:l2}
    \sum_{i=2}^n c_i^2 \alpha_i^{2t} &\leq (1-\alpha)^{2t} \sum_{i=2}^n c_i^2  \\
    &\leq (1-\alpha)^{2t} \sum_{i=1}^n c_i^2 \leq (1-\alpha)^{2t}, \nonumber
\end{align}
where we have used $1 =\sum_{i=1}^n c_i \geq \sum_{i=1}^n c^2_i$ by Minkowski inequality.
Then, we have
\begin{align*}
    TV_G(P(t)^G,\pi^G) \leq \sqrt{n}(1-\alpha)^t.
\end{align*}

This means that when choosing $t\simeq \alpha^{-1}\log n$, we have
\begin{align}
    TV_G(P(t)^G,\pi^G) \leq \sqrt{n}e^{-\alpha t} 
    \lesssim 1/\sqrt{n}, 
\end{align}
where we have used $1-x \leq e^{-x}$.
Therefore, we can say that when the mixing time is of $O(\alpha^{-1}\log n)$, the graph total variation distance is small for sufficiently large $n$, and the probability distribution is close to the stationary distribution.
Finally, as $AB^{-1}$ is \textit{similar} to the corresponding normalized adjacency matrix for non-regular graphs, non-regular graphs also have the same convergence behavior \cite{graphlecture}.

\subsection{Network Shuffling as a Random Walk on Graphs}
\label{subsec:setup}
Let us elaborate on the setups of network shuffling in terms of graph theoretical notions introduced in Section \ref{sec:rdmgrph}.

We consider only connected graphs in our analysis.
The privacy of disconnected graphs may be viewed as a parallel composition of the privacy of connected sub-graphs, meaning that shuffling occurs only within each connected sub-graphs.
It is then sufficient to analyze connected graphs only.

We also assume that all users on the graph participate in network shuffling.
That is, all users are required to participate in receiving and sending reports to neighboring users.
Before the process of exchanging reports starts, each user is also required to have produced a randomized report to be exchanged: user $i$ for $i\in [n]$ produces $\calA_{ldp}(x_i)=s_i$.
After the final round of random exchanges, a user can possibly have received no report, a single report, or multiple reports. 
The set of reports held by user $i$ is denoted by $\{s_j\}^i_{j\in[n]}$.
Figure \ref{fig:neuron} illustrates how the reports are distributed to the users at different time steps.

We analyze separately the following two scenarios at any time step:
\begin{itemize}
    \item \textbf{Stationary distribution}: $P^G(t)$ of ergodic graph. 
    \item \textbf{Symmetric distribution}: $P^G(t)$ of $k$-regular graph. 
\end{itemize}
The technical reason for this separation is due to the their difference in the dynamics of $P^G(t)$ at time $t$, to be explained in detail in Section \ref{sec:finite}.

The first scenario concerns with the modeling of any connected and non-bipartite graphs which are best analyzed (as will be shown later) with respect to the stationary distribution they converge to, hence the name \textit{stationary distribution}.

The second scenario concerns with a walk that is symmetric over all nodes. 
Under this setup, one can w.l.o.g. analyze the privacy with respect to the first user, allowing for precise privacy analysis.
Note that such a consideration is not purely of theoretical interests.
Certain design \cite{DBLP:conf/iptps/FreedmanSCM02,DBLP:conf/ccs/FreedmanM02,DBLP:conf/wpes/RennhardP02} implements peer discovery protocols where each user proactively selects a set of users to form a path through the network. 
In this case, forming a $k$-regular graph is a reasonable consideration when each user uses the same peer discovery protocol to select a fixed $k$ number of other users to communicate with.
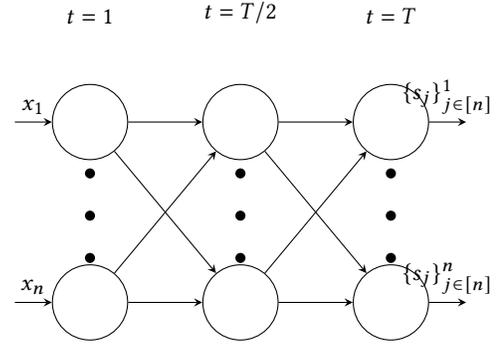
\begin{figure}
\begin{center}
{\begin{tikzpicture}[x=1cm, y=1.2cm, >=stealth]

\foreach \m/\l [count=\y] in {1,missing,2}
  \node [every neuron/.try, neuron \m/.try] (input-\m) at (0,2-\y) {};

\foreach \m [count=\y] in {1, missing,2}
  \node [every neuron/.try, neuron \m/.try ] (hidden-\m) at (2,2-\y) {};

\foreach \m [count=\y] in {1,missing,2}
  \node [every neuron/.try, neuron \m/.try ] (output-\m) at (4,2-\y) {};

\foreach \l [count=\i] in {1,n}
  \draw [<-] (input-\i) -- ++(-1,0)
    node [above, midway] {$x_\l$};

\foreach \l [count=\i] in {1,n}
  \draw [->] (output-\i) -- ++(1,0)
    node [above, midway] {$\{s_j\}^\l_{j\in[n]}$};

\foreach \i in {1,...,2}
  \foreach \j in {1,...,2}
    \draw [->] (input-\i) -- (hidden-\j);

\foreach \i in {1,...,2}
  \foreach \j in {1,...,2}
    \draw [->] (hidden-\i) -- (output-\j);

\foreach \l [count=\x from 0] in {1, T/2, T}
  \node [align=center, above] at (\x*2,2) {$t=\l$};
\end{tikzpicture} 
\caption{Graphical representation of the reporting protocol, with each node represents a user (only the first and the $n$-th users are shown above).
Each user starts with her own report ($t=1$).
Then, for each time step (only $t=T/2$ is shown above), the user sends the randomized report to other connected users.
At the final communication round ($t=T$), each user sends the randomized report(s) held by her to the server according to the protocols introduced in text.}
    \label{fig:neuron}}
\end{center}
\end{figure}

\subsection{Reporting Protocols}
\label{subsec:protocol}
We propose two user protocols of sending reports.
Note that we first abstract away from security considerations of the protocol which are discussed in Section \ref{subsec:comm}.
With this abstraction, we focus on the mechanism of sending the reports to the server by users.

\noindent\textbf{``All'' protocol}. The first protocol is described in Algorithm \ref{alg:a_all}.
Here, each user exchanges the reports in a random-walk manner for a pre-determined number of communication rounds.
Afterwards, the user sends all reports held by her to the server.
Note that the user simply sends a null response to the server if no report is on her hand during the final round.
This protocol is referred to as $\calA_{all}$.

\begin{algorithm}[t]
\begin{algorithmic}[1]
\caption{$\calA_{all}$ - Client-side protocol for user-$j$.}
\label{alg:a_all}
\STATE \textbf{Parameters:} local randomizer $\calA_{ldp}$, number of steps $t$, set of neighboring clients $A_j$ \\[0.5em]
\STATE $s_j \leftarrow \calA_{ldp}(x_j)$
\STATE $Z_j \leftarrow \{s_j\}$
\FOR{$i \in [t]$}
\FOR{$k \in S_j$}
\STATE Sample client $l \xleftarrow{u.a.r} A_j$ 
\STATE Send report $k$ to client $l$
\ENDFOR
\STATE Receive reports from neighboring clients and add them to $S_j$.
\ENDFOR
\IF{$S_j$ is empty}
\STATE \textbf{return} null
\ELSE
\STATE \textbf{return} $S_j$
\ENDIF
\end{algorithmic}
\end{algorithm}

\noindent\textbf{``Single'' protocol}. The second protocol is a modification of Algorithm \ref{alg:a_all} to provide better privacy guarantees, inspired by the federated learning approach in \cite{DBLP:conf/nips/BalleKMTT20}.
As before, users exchange the reports in a random-walk manner for a pre-determined number of communication rounds.
After the final communication round, if $\{s_j\}^i_{j\in[n]}$ is empty, a dummy report is sent to the server by user $i$.
Otherwise, a report is sampled uniformly from $\{s_j\}^i_{j\in[n]}$ to be sent to the server.
Intuitively, since each user sends only a single report irrespective of how many reports she has received, it is harder for the adversary to infer about the identity of the received reports, providing stronger privacy guarantees.
Note that as a trade-off, not sending all reports to the server could induce utility loss.
The protocol is described in Algorithm \ref{alg:a_single}, and is referred to as $\calA_{single}$.

\begin{algorithm}[t]
\begin{algorithmic}[1]
\caption{$\calA_{single}$ - Client-side protocol for user-$j$.}
\label{alg:a_single}
\STATE \textbf{Parameters:} local randomizer $\calA_{ldp}$, number of steps $t$, set of neighboring clients $A_j$ \\[0.5em]
\STATE $s_j \leftarrow \calA_{ldp}(x_j)$
\STATE $S_j \leftarrow \{s_j\}$
\FOR{$i \in [t]$}
\FOR{$k \in S_j$}
\STATE Sample client $l \xleftarrow{u.a.r} A_j$ 
\STATE Send report $k$ to client $l$
\ENDFOR
\STATE Receive reports from neighboring clients and insert them to $S_j$.
\ENDFOR
\IF{$S_j$ is empty}
\STATE $J_j \leftarrow \calA_{ldp}\left(\mathbf{0} \right) $ \hfill // Dummy report
\ELSE
\STATE Sample $J_j \xleftarrow{u.a.r.} S_j$
\ENDIF
\STATE \textbf{return} $J_j$
\end{algorithmic}
\end{algorithm}

\subsection{Communication Protocol}
\label{subsec:comm}
As a concrete use case to motivate this work, let us consider collecting data from users of instant messanging apps offered by social networking services, such as Whatsapp, or Messenger offered by Facebook.
Messages are commonly exchanged between two or more parties, where the sender sends messages to recipients connected to a centralized network run by a server.
Within the social networking setting, users commonly interact only with other users connected on the social network. 
It is then natural to consider exchanging the private reports only with users connected on the social network to hide the private reports within the traffic of usual data exchanges.
We note that user-to-user communication via a centralized server is not strictly required in other applications such as wireless sensor or Internet of Things (IoT) networks, where peer-to-peer (P2P) communication is feasible.

Let us consider the communication protocol running between the server and $n$ users.
The server is assumed to be able to communicate with the users via a secure, authenticated and private channel.
We also assume the existence of a Public Key Infrastructure (PKI), which ensures that only authenticated users can participate in the data exchange.
All users utilize two types of public-private keypairs. One is for end-to-end encrypted communication with other users ($<c_1^{pk},c_1^{sk}>$), which is unique to each user, and the other is for encrypting the report to be exchanged with other users, where the server holds the corresponding private key ($<c_2^{pk},c_2^{sk}>$).

All users initially publish and receive public keys via the PKI.
A user then applies local randomizer to the report, and encrypts it with $c_2^{pk}$.
Subsequently, the user uses $c_1^{pk}$ to send the report to another user in an end-to-end encrypted manner.
After exchanging the reports for a number of rounds of communication, the user sends the reports (encrypted only with $c_2^{pk}$) to the server.
The server then decrypts all the received reports and perform data analysis on them.
Notice that the server can link a user to her last received reports.
The illustration is shown in Figure \ref{fig:compro}.
We next analyze the security properties of the protocol.

\noindent\textbf{Security against adversarial server}. The use of $<c_1^{pk},c_1^{sk}>$ prevents the exposure of the encrypted report to anyone else other than the communicating users.
This end-to-end encryption especially protects the report's privacy from the possibly adversarial server.

\noindent\textbf{Security against honest-but-curious users}. The use of $<c_2^{pk},c_2^{sk}>$ prevents the exposure of the content of the report to anyone else other than the user applying the local randomizer and the server.
This protects the report's privacy from honest-but-curious users.

\begin{figure}
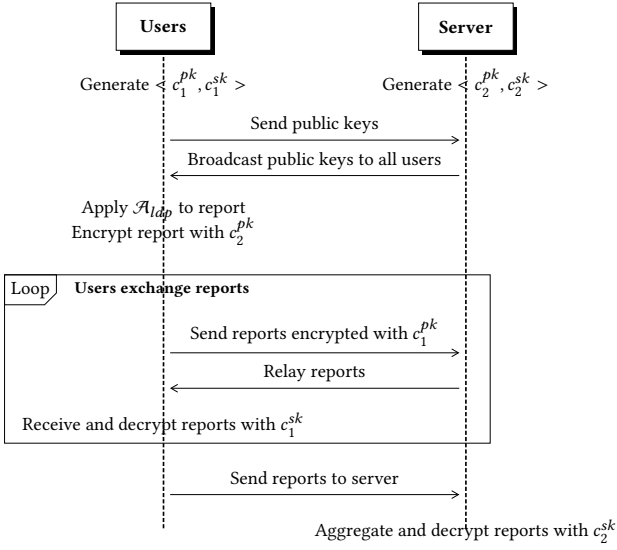

\begin{center}
\resizebox{0.48\textwidth}{!}
{
\begin{sequencediagram}
\tikzstyle{dotted}=[thick,dash pattern=on 2pt off 1pt]
\newinst{ue}{\textbf{Users}}
\newinst[3.5]{sv}{\textbf{Server}}
\node [below=3mm](t0) at (ue) {Generate $<c_1^{pk},c_1^{sk}>$};
\node [below=3mm](t00) at (sv) {Generate $<c_2^{pk},c_2^{sk}>$};
\postlevel
\mess{ue}{Send public keys}{sv}
\mess{sv}{Broadcast public keys to all users}{ue}
\node[below=3mm](t1) at (mess to){Apply $\calA_{ldp}$ to report};
\node[below=6mm](t2) at (mess to){Encrypt report with $c_2^{pk}$};
\postlevel
\postlevel
\begin{sdblock}{Loop}{\textbf{Users exchange reports}}
\mess{ue}{Send reports encrypted with $c_1^{pk}$}{sv}
\mess{sv}{Relay reports}{ue}
\node[below=3mm](t3) at (mess to){Receive and decrypt reports with $c_1^{sk}$};
\end{sdblock}
\postlevel
\mess{ue}{Send reports to server}{sv}
\node[below=3mm](t3) at (mess to){Aggregate and decrypt reports with $c_2^{sk}$};
\end{sequencediagram}}
\caption{Communication protocol for network shuffling based on instant messaging apps.}
\label{fig:compro}
\end{center}
\end{figure}

We would like to emphasize that our communication protocol based on asymmetric encryption is simple to implement in contrast to secure aggregation \cite{DBLP:conf/ccs/BonawitzIKMMPRS17} or secure shuffling, where sophisticated secure multi-party computation protocols or TEEs are required, which can be challenging in terms of implementation.

\subsection{Practical Considerations}
\label{subsec:prac}
The network shuffling protocols described thus far are self-contained and secure if the threat model and  assumptions given in Section \ref{subsec:threat} are satisfied.
Here, we consider scenarios where some of the assumptions are relaxed.
While the detailed study is beyond the current scope, we discuss potential solutions or workarounds by giving reference to relevant works in the literature.

\noindent{\textbf{Fault tolerance.}
In practice, users may fail to operate properly due to various reasons. They may disconnect from the network temporarily due to, e.g., battery depletion or network outage.
One way to model such a situation is via the use of \textit{lazy random walk}.
A lazy random walk is a random walk where at each time step, the walk has a certain probability of staying at its current node instead of transitioning to other nodes.
This behavior reflects the probability of certain users being disconnected temporarily and unable to exchange reports at a certain time.
Another approach is to analyze random walks on a dynamic graph to study the effects of potential walker losses \cite{DBLP:journals/tc/ZhongSS08}.

\noindent\textbf{Collusion.}
Colluding users can be a threat to the anonymity guarantees.
Discussing this threat however requires more assumptions on the system in use. 
For example, implementations at the network layer of IP \cite{DBLP:conf/iptps/FreedmanSCM02,DBLP:conf/ccs/FreedmanM02} defend against such an adversary by requiring the node to select peers in a pseudo-random way: it is highly unlikely for an adversary to control all pseudo-randomly selected nodes in a path.
Another defensive method is to monitor user behavior and use collusion detection algorithms to counter such an adversary \cite{DBLP:conf/wpes/RennhardP02}.
Moreover, the orchestrating server may drop users considered to be adversarial; such a dynamic scenario may be analyzed with a dynamic graph as discussed above.

\section{Privacy Analysis}
\label{sec:analysis}
In this section, we first describe useful preparations for analyzing the privacy of network shuffling.
Then, we study privacy theorems for the scenarios and protocols in consideration along with the proof sketches and interpretations.
Some of the technical analyses are relegated to Section \ref{sec:proofs}.
Numerical evaluations involving the use of real-world datasets are also given.

\subsection{Preparations}
The following notations describing the distance between distributions are convenient when illustrating the proof. Given two distributions, $\mu$ and $\mu'$ that are $(\eps, \delta)$-DP close, i.e., for all measurable outcomes $S$, the following holds:
$$e^{-\eps} \bracket{\mu'(S) - \delta} \leq \mu(S) \leq e^{\eps}\mu'(S) + \delta, $$
we denote this relation by $\mu \approxeq_{(\eps, \delta)}\mu'$. 

We will make use of the heterogeneous advanced composition for DP \cite{KOV17}. For a sequence of mechanisms, $\calA_1, \ldots, \calA_k$ which are $\eps_1, \ldots, \eps_k$-DP each, the $k$-fold adaptive composition is $\left(\eps, \delta\right)$-DP for any $\delta \in (0, 1)$, and 
\begin{equation}
    \eps = \sum\limits_{i\in[k]} \frac{(e^{\eps_i} - 1) \eps_i}{e^{\eps_i} + 1} + \sqrt{2 \log{\frac{1}{\delta}} \sum\limits_{i\in[k]}\eps_i^2}. \label{eqn:heta}
\end{equation}

We next provide several other results useful for proving the main theorems of this paper.

\begin{lem}\label{lem:allocate}
Let $L = (L_1, \ldots, L_n)$ denote the number of reports each of $n$ users is allocated in the protocol from Figure~\ref{alg:a_all}.
Also let $P^G_1, \ldots, P^G_n$ be the allocation probabilities.
With probability at least $1 - \delta$, we have $$\ltwo{L} \leq 
     \sqrt{(n^2-n)\sum_{i \in [n]}{P_i^G}^2}+ \sqrt{n \log(1/\delta)}.$$
\end{lem}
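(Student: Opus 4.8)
The plan is to reduce the claim to a balls-into-bins concentration statement and then combine a second-moment bound on $\E[\ltwo{L}]$ with McDiarmid's inequality. In the protocol of Algorithm~\ref{alg:a_all} each of the $n$ reports performs an independent random walk, so after the prescribed number of rounds the location of report $j$ is an index $Y_j\in[n]$; under the modeling assumption in play (the stationary/symmetric regime set up in Section~\ref{subsec:setup}) the $Y_j$ are independent with common law $\Pr[Y_j=i]=P^G_i$. Then $L_i=\#\{j:Y_j=i\}$, i.e. $(L_1,\dots,L_n)\sim\mathrm{Multinomial}(n;P^G_1,\dots,P^G_n)$ and $L=\sum_{j=1}^n\bfe_{Y_j}$ with $\bfe_i$ the standard basis vectors, and the goal is a high-probability upper bound on $\ltwo{L}$.

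For the expectation, Jensen gives $\E[\ltwo{L}]\le\sqrt{\E[\ltwo{L}^2]}=\sqrt{\sum_i\E[L_i^2]}$. Marginally $L_i\sim\mathrm{Bin}(n,P^G_i)$, so $\E[L_i^2]=\Var{L_i}+(\E[L_i])^2=nP^G_i(1-P^G_i)+n^2(P^G_i)^2$; summing over $i$ and using $\sum_iP^G_i=1$ yields $\E[\ltwo{L}^2]=n+(n^2-n)\sum_i(P^G_i)^2$, hence $\E[\ltwo{L}]\le\sqrt{(n^2-n)\sum_i(P^G_i)^2}+\sqrt n$ by subadditivity of the square root. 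The stray $\sqrt n$ is of lower order than the eventual deviation term — note $\sum_i(P^G_i)^2\ge 1/n$, so $(n^2-n)\sum_i(P^G_i)^2\ge n-1$ — and I would absorb it there.

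For the fluctuation, view $\ltwo{L}=g(Y_1,\dots,Y_n)$ as a function of the independent coordinates $Y_j$. Changing a single $Y_j$, say from $a$ to $b$, replaces $L$ by $L-\bfe_a+\bfe_b$, so $\ltwo{L}$ changes by at most $\ltwo{\bfe_a-\bfe_b}=\sqrt2$; thus $g$ satisfies the bounded-difference condition with $c_j=\sqrt2$ and $\sum_jc_j^2=2n$. McDiarmid's inequality then gives $\Pr[\ltwo{L}\ge\E[\ltwo{L}]+\tau]\le\exp(-\tau^2/n)$, and taking $\tau=\sqrt{n\log(1/\delta)}$ makes the failure probability at most $\delta$. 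Combining with the mean bound gives the stated inequality.

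The step needing the most care is pinning down the bounded-difference constant so that the tail term is exactly $\sqrt{n\log(1/\delta)}$: this relies on the $1$-Lipschitzness of $\ltwo{\cdot}$ together with the fact that a single relabeling perturbs $L$ by a vector of $\ell_2$-norm $\sqrt2$, which is what produces $\sum_jc_j^2=2n$. The other thing to state explicitly is the hypothesis that report destinations are independent with a shared marginal $P^G$, since the balls-into-bins reduction above only holds under it. An alternative route is to write $\ltwo{L}^2=n+C$ where $C$ counts ordered colliding report pairs (so that $\E[C]=(n^2-n)\sum_i(P^G_i)^2$) and concentrate $C$ directly, but since $C$ is a sum of dependent indicators one is pushed back to a bounded-differences argument anyway, so the route above is cleaner.
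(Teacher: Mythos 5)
Your proof follows the same route the paper indicates for this lemma: the paper omits the details but states that the proof ``is based on McDiarmid's inequality,'' and your combination of a second-moment bound on $\E[\ltwo{L}]$ with a bounded-differences argument (a single relabeling moves $L$ by $\ltwo{\bfe_a-\bfe_b}=\sqrt{2}$, hence the tail $e^{-\tau^2/n}$ and the deviation term $\sqrt{n\log(1/\delta)}$) is exactly that argument. The one caveat is the stray $\sqrt{n}$ coming from $\E[\ltwo{L}]\le\sqrt{n+(n^2-n)\sum_{i}{P_i^G}^2}$: it cannot be absorbed for free into either term of the stated bound (folding it into the first term costs a constant factor, since that term is only guaranteed to be $\ge\sqrt{n-1}$, and folding it into the tail term requires $\log(1/\delta)$ large), so strictly your argument yields the lemma with an extra additive $O(\sqrt{n})$ --- a lower-order slack that the paper's statement appears to gloss over as well.
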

The proof is based on McDiarmid's inequality and is omitted here due to space constraint.

The following lemma is useful for extending the study of pure DP to approximate DP via the total variation distance $TV(P,Q) \coloneqq \sup\limits_{x\subset E} |P(x) - Q(x)|$, where $P,Q$ are probability distributions on $E$ (see also Definition \ref{def:tv}):
\begin{lem}[Balle et al., Cheu et al.]\label{lem:cheu}
Suppose $\calA : \calD \rightarrow \calS$ is an $(\eps_0, \delta_0)$-DP local randomizer with $\delta_0 \leq \frac{(1 - e^{-\eps_0}) \delta_1}{4 e^{\eps_0} \left(2 + \frac{\ln(2/\delta_1)}{\ln(1/(1-e^{-5\eps_0}))}\right)}$.
Then there exists an $8 \eps_0$-DP local randomizer $\tilde{\calA} : \calD \rightarrow \calS$ such that for any $x \in \calD$ we have $TV(\calA(x), \tilde{\calA}(x)) \leq \delta_1$.
\end{lem}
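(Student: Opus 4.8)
The plan is to follow the \emph{rounding} argument of Balle et al.\ and Cheu et al.: given the $(\eps_0,\delta_0)$-DP local randomizer $\calA$, fix an arbitrary reference input $x^\star\in\calD$, set $\mu^\star:=\calA(x^\star)$, and build $\tilde\calA$ by truncating the likelihood ratio of $\calA(x)$ against $\mu^\star$ into a bounded multiplicative window and renormalizing. Concretely, for each $x$ write $p_x:=d\calA(x)/d\mu^\star$ (mutual absolute continuity can be arranged by analyzing against $\tfrac12\mu^\star+\tfrac12\calA(x)$, or assumed w.l.o.g.), and let $\tilde\calA(x)$ have density w.r.t.\ $\mu^\star$ proportional to $\mathrm{med}(e^{-c\eps_0},\,p_x,\,e^{c\eps_0})$ for a window constant $c$ on the order of the ``$5$'' appearing in the hypothesis. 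Two things then need checking: (i) $\tilde\calA$ is a pure $8\eps_0$-DP local randomizer, and (ii) $TV(\calA(x),\tilde\calA(x))\le\delta_1$ for every $x$.

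Part (i) I expect to be immediate from the construction: each clipped density lies in $[e^{-c\eps_0},e^{c\eps_0}]$, so for any $x,x'$ the ratio of the \emph{unnormalized} densities is in $[e^{-2c\eps_0},e^{2c\eps_0}]$, and each renormalizing constant lies in $[1,(1-\rho)^{-1}]$, where $\rho$ is the clipped mass; choosing the window so that $2c\eps_0$ plus the renormalization slack sums to at most $8\eps_0$ gives the claim, the only input being that $\rho$ is small, which is part (ii).

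Part (ii) will be the quantitative heart. The key estimate is a tail bound on the privacy-loss variable $\calL_x(y):=\ln p_x(y)$ obtained by using the $(\eps_0,\delta_0)$ guarantee \emph{against itself}: for an integer $j\ge1$, applying $\calA(x)(S)\le e^{\eps_0}\mu^\star(S)+\delta_0$ to $S=\{\calL_x>(j+1)\eps_0\}$, on which $d\mu^\star\le e^{-(j+1)\eps_0}d\calA(x)$, yields $(1-e^{-j\eps_0})\,\Pr_{Y\sim\calA(x)}[\calL_x(Y)>(j+1)\eps_0]\le\delta_0$, hence $\Pr_{Y\sim\calA(x)}[\calL_x(Y)>(j+1)\eps_0]\le\delta_0/(1-e^{-j\eps_0})$; a symmetric argument against $\mu^\star$ controls the lower tail. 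Then $TV(\calA(x),\tilde\calA(x))$ is at most a constant times the mass displaced by clipping, which a layer-cake decomposition over levels $j=c,c+1,\dots$ bounds using the above. One does not sum forever: after roughly $j^\star\approx\ln(2/\delta_1)/\ln\!\big(1/(1-e^{-5\eps_0})\big)$ levels the residual tail is already $\le\delta_1/2$ and contributes directly, while each of the first $j^\star$ levels contributes $O\!\big(e^{\eps_0}\delta_0/(1-e^{-\eps_0})\big)$; adding the analogous lower-tail and renormalization terms and invoking $\delta_0\le\frac{(1-e^{-\eps_0})\delta_1}{4e^{\eps_0}\left(2+\ln(2/\delta_1)/\ln(1/(1-e^{-5\eps_0}))\right)}$ forces the total to be at most $\delta_1$.

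The hard part will not be any single estimate but making the construction work \emph{uniformly over all pairs} $x,x'$ with one randomizer: clipping $\calA(x)$ and $\calA(x')$ separately clips them on different ``good'' sets, so a naive restrict-and-renormalize leaves a mismatch. This is precisely why I would clip relative to a \emph{single fixed reference} $\mu^\star$ and exploit additivity of the privacy loss, $\calL_{x\|x'}=\calL_x-\calL_{x'}$. The remaining work — tuning the window so the bound lands exactly at $8\eps_0$ and truncating the layer-cake sum at the right depth so it lands exactly at $\delta_1$ — is the bookkeeping that produces the specific constant $8$ and the stated admissible range for $\delta_0$, and for these details one may simply defer to Balle et al.\ and Cheu et al.
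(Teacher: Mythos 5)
Your sketch is essentially the argument behind the cited result: the paper itself offers no proof beyond pointing to Lemma~A.3 of Balle et al.\ (which in turn follows Cheu et al.), and that proof is exactly the construction you describe --- clip the privacy-loss/likelihood ratio against a single fixed reference $\calA(x^\star)$, bound the clipped mass by bootstrapping the $(\eps_0,\delta_0)$ guarantee into geometric tail bounds on the loss at levels $j\eps_0$, and truncate the resulting sum at depth $\ln(2/\delta_1)/\ln\bigl(1/(1-e^{-5\eps_0})\bigr)$, which is precisely where the constants in the admissible range for $\delta_0$ come from. The only caveat is that the final tuning of the window so the pure-DP parameter lands at $8\eps_0$ is deferred in your write-up, but you defer it to the same references the paper does, so there is no gap relative to what the paper provides.
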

\begin{proof}
See Lemma A.3 of \cite{DBLP:conf/nips/BalleKMTT20}.
\end{proof}

In order to study shuffling, we make use of the technique introduced in \cite{soda-shuffling}, which reduces shuffling ($\calA_{sl}$) to swapping ($\calA_{swap}$) by swapping the first element with another element selected u.a.r. from the dataset, and applies the local randomizers.

Our privacy analysis concerns with how the adversary deduces the underlying data identities (assuming that $P^G$ is known) by making observations on the distribution of reports at the final time step, $\{s_j\}^i_{j\in[n]}$, as in Figure \ref{fig:neuron}.
At the heart of the proof is the reduction of the protocol to variants of the swapping algorithm \cite{soda-shuffling}.

Compared to uniform shuffling, analyzing the privacy of network shuffling faces a few additional challenges.
To understand this, let us first describe the technique \cite{soda-shuffling} of analyzing uniform shuffling. 

The uniform shuffling mechanism may be considered as a sequence of algorithms of the form $s_i \leftarrow \calA^{(i)}_{ldp}(s_{1:i-1}; x_{\pi(i)})$.
W.l.o.g, consider two datasets $D,D'$ differing in the first element.
For any permutation, one may first permute the last $n-1$ elements.
Then, the first element is swapped with an element uniformly sampled from the $n$ indices.
It is easy to see that this is equivalent to performing uniform permutation, and hence reduces uniform shuffling to swapping.
As each output has certain probability $p$ of being swapped with the first element, its distribution can be seen as a mixture of distribution, $\mu = (1-p)\mu_0 + p \mu_1$, where $\mu_0$ is independent of the first element (not swapped by the first element), and $\mu_1$ is the output distribution of $x_1$ randomized by the $i$-th local randomizer (swapped by the first element).
Then, one can show that $\mu_0$ and $\mu_1$ are overlapping mixtures to achieve the desired amplification (like subsampling \cite{BBG18}).
The amplified $\eps_i$ is then obtained by bounding $p$.
Finally, the heterogeneous composition theorem \cite{KOV17} is used to compose all $\eps_i$'s to obtain the overall $\eps$. 

Network shuffling is different in at least three ways. 
First, the adversary can link the reports $\{s_j\}^i_{j\in[n]}$ to the user last receiving them, different from uniform shuffling, where the link of the reports to the user is completely broken (randomized). 
Second, since in $\calA_{all}$, each user may output more than one report, the decomposition of uniform shuffling does not apply.
Third, as $P^G_i \neq P^G_j$ for $i,j \in [n]$ in general, one also needs to modify the uniform sampling assumptions made in \cite{soda-shuffling}.
We next study $P^G(t)$ which is vital to our analyses.

\subsection{Finite-Time Privacy Guarantees}
\label{sec:finite}
To characterize $P^G(t)$, let us take a closer look at the dynamics of message exchanges for $k$-regular (symmetric distribution) and ergodic graphs (stationary distribution).
Message exchanges on a $k$-regular graph can be tracked (in a probabilistic sense) at each time step due to its symmetric structure.
This allows us to calculate $P^G(t)$ exactly and provide a precise privacy analysis at any $t$.

On the other hand, generic ergodic graphs' $P^G(t)$ is dependent on the initial distribution especially when $t$ is finite.
We resort to giving an upper (worst-case in DP notion) bound in this case.
That is, we give an upper bound on $P^G(t)$ for a protocol that runs for t steps and then stops.
To derive the bound, it is convenient to consider it as a deviation from the stationary distribution, $\pi^G$, as detailed in the following. 

Recall from Section \ref{sec:rdmgrph} that any probability distribution at time step $t$ can be expanded with the eigenvectors of $AB^{-1}$: $P^G(t) = \sum_{i=1}^n c_i \alpha_i^te_i$ and $\pi^G = c_1 e_1$.
Then, using the fact that the orthogonal transformation preserves the inner product, the error $\Delta {P^G}^2(t)$ is
\begin{align}
\label{eq:pg_upper}
   \Delta {P^G}^2(t)&\coloneqq \sum\limits_{i\in[n]}{P^G_i}^2(t) - \sum\limits_{i\in[n]}{\pi^G_i}^2 \nonumber \\
    &= \sum_{i=1}^n c_i^2\alpha_i^{2t} - c_1^2 \nonumber \\
    &= \sum_{i=2}^n c_i^2\alpha_i^{2t} \leq (1-\alpha)^{2t},
\end{align}
following the arguments given in Equation \ref{eq:l2}.
Also from the arguments given in Section \ref{sec:rdmgrph}, we get $\Delta {P^G}^2(t) \lesssim \alpha/n^2$ when $t \simeq \alpha^{-1} \log n$.

The reason we compute $\sum\limits_{i\in[n]}{P^G_i}^2(t)$ will be made clear when we present the privacy theorems in the following, as we will see that they all depend on this quantity.
We begin by providing the privacy theorem for $\calA_{all}$ with stationary distribution next.
\subsection{ \texorpdfstring{$\calA_{all}$}{Aall} with Stationary Distribution}
\label{sec:all_protocol_proof}

\begin{thm}[``All'' protocol, Stationary distribution]
Let $\calA_{ldp}$ be a $\eps_0$-local randomizer. Let $\calA_{all}: \calD^n \to \calS^{(1)} \times \cdots \times \calS^{(n)}$ be the protocol as shown in Algorithm \ref{alg:a_all} sending all reports to the server.
Then, $\calA_{all}$ satisfies ($\eps,\delta+\delta_2$)-DP, with
\begin{align}
\label{eq:a_all_sta}
    \eps = \frac{(e^{\eps_0}-1)^2 e^{4 \eps_0}\eps_1^2} {2}
    +
    \eps_1\sqrt{2 (e^{\eps_0}-1)^2 e^{4 \eps_0} \log{\frac{1}{\delta}}},
\end{align}
$\eps_1 = \sqrt{(1-\frac{1}{n})\sum\limits_{i\in[n]} {P_i^G}^2} + \sqrt{\frac{\log(1/\delta_2)}{n}}$ and
$\sum\limits_{i\in[n]}{P^G_i}^2 \leq \sum\limits_{i\in[n]}{\pi^G_i}^2 + (1-\alpha)^{2t}$ when the protocol runs and stops at time step $t$. $\pi^G$ is the graph stationary distribution and $\alpha$ is the graph spectral gap.
Moreover, if $\calA_{ldp}$ is $(\eps_0,\delta_0)$-DP for $\delta_0 \leq \frac{(1 - e^{-\eps_0}) \delta_1}{4 e^{\eps_0} \left(2 + \frac{\ln(2/\delta_1)}{\ln(1/(1-e^{-5\eps_0}))}\right)}$, $\calA_{all}$ is $(\eps',\delta')$-DP with $\eps' = \frac{(e^{8\eps_0}-1)^2 e^{32 \eps_0}\eps_1^2} {2}
    +
    \eps_1\sqrt{2 (e^{8\eps_0}-1)^2 e^{32 \eps_0} \log{\frac{1}{\delta}}}$ and $\delta' = \delta + \delta_2+ n (e^{\eps'}+1) \delta_1$.
\label{thm:all_protocol}
\end{thm}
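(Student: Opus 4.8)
The plan is to reduce the claim to a pure-DP statement conditioned on the realized report allocation $L=(L_1,\dots,L_n)$, invoke the swap-reduction machinery of \cite{soda-shuffling} to bound a per-user amplified privacy loss, compose these losses by heterogeneous advanced composition, and finally pay for the conditioning event and (in the approximate case) for the local randomizer's $\delta_0$. W.l.o.g.\ we argue DP at index~$1$; because the position bound $\sum_i {P^G_i}^2 \le \sum_i {\pi^G_i}^2 + (1-\alpha)^{2t}$ of Equation~\ref{eq:pg_upper} holds uniformly over the starting node, the resulting guarantee is insensitive to which record is changed. The random walks ignore report contents, so the distribution of $L$ is identical for adjacent inputs, and $L$ is observable by the central analyzer; we may therefore condition on it. By Lemma~\ref{lem:allocate}, with probability at least $1-\delta_2$ the event $\ltwo{L}\le n\eps_1$ holds, where $n\eps_1 = \sqrt{(n^2-n)\sum_i{P^G_i}^2}+\sqrt{n\log(1/\delta_2)}$ is exactly the Lemma~\ref{lem:allocate} bound with $\delta\mapsto\delta_2$. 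On this good event we establish $(\eps,\delta)$-DP of the conditional mechanism; off it we charge the excess mass to $\delta_2$, giving $(\eps,\delta+\delta_2)$-DP overall by the standard ``DP up to a bad event'' argument.

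Fix $L$ on the good event. In the stationary-distribution scenario each report carries (approximately) the common position distribution $P^G$, so conditioned on the counts $L$ the report-to-user assignment is uniform among assignments consistent with $L$; in particular the sensitive report $s_1=\calA_{ldp}(x_1)$ is held by user $i$ with probability exactly $L_i/n$ (this identity holds even when $P^G$ is non-uniform, which is one of the adaptations over uniform shuffling). Following the $\calA_{swap}$ reduction of \cite{soda-shuffling}, we expose the $n$ users' multisets one at a time and view user $i$'s multiset as an overlapping mixture $\bigl(1-\tfrac{L_i}{n}\bigr)\mu_0^{(i)} + \tfrac{L_i}{n}\mu_1^{(i)}$, with $\mu_0^{(i)}$ the multiset distribution when $s_1$ is absent from user~$i$ and $\mu_1^{(i)}$ when it is present; since $\calA_{ldp}$ is $\eps_0$-DP, $\mu_1^{(i)}$ is $\eps_0$-indistinguishable both from its counterpart under the adjacent input and from $\mu_0^{(i)}$. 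The overlapping-mixtures / privacy-blanket argument then gives that revealing user $i$'s multiset is $\eps'_i$-DP at index~$1$ with $\eps'_i \le (e^{\eps_0}-1)e^{2\eps_0}\,L_i/n$, the extra $e^{2\eps_0}$ absorbing the slack of the mixture bound together with the conditioning on previously revealed multisets. This step --- adapting the swap reduction to the facts that users are linkable to their last-held reports, that a user may output several reports in $\calA_{all}$, and that $P^G$ is non-uniform --- is the main obstacle and is where essentially all the technical work sits.

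Given the $\eps'_i$, apply heterogeneous advanced composition (Equation~\ref{eqn:heta}, \cite{KOV17}) to the $n$ per-user mechanisms; using $\tfrac{(e^x-1)x}{e^x+1}\le \tfrac{x^2}{2}$ we get $\eps \le \tfrac12\sum_i{\eps'_i}^2 + \sqrt{2\log(1/\delta)\sum_i{\eps'_i}^2}$. Since $\sum_i{\eps'_i}^2 = (e^{\eps_0}-1)^2e^{4\eps_0}\,\ltwo{L}^2/n^2 \le (e^{\eps_0}-1)^2e^{4\eps_0}\eps_1^2$ on the good event, substituting reproduces Equation~\ref{eq:a_all_sta}, and bounding $\sum_i {P^G_i}^2$ via Equation~\ref{eq:pg_upper} gives the stated concrete form. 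Combining with the bad event yields $(\eps,\delta+\delta_2)$-DP.

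For the approximate-DP statement, suppose $\calA_{ldp}$ is $(\eps_0,\delta_0)$-DP with $\delta_0$ satisfying the hypothesis of Lemma~\ref{lem:cheu}; replace it by the $8\eps_0$-DP local randomizer $\tilde\calA$ with $TV(\calA_{ldp}(x),\tilde\calA(x))\le\delta_1$ for every $x$. Running the argument above with $\eps_0\mapsto 8\eps_0$ shows the idealized protocol built from $\tilde\calA$ is $(\eps',\delta+\delta_2)$-DP with $\eps'$ as stated. Coupling the $n$ independent report generations, the real and idealized protocols are within total variation $n\delta_1$ on each input; pushing this through both sides of the $(\eps',\delta+\delta_2)$-DP inequality gives $\bigl(\eps',\,\delta+\delta_2+n(e^{\eps'}+1)\delta_1\bigr)$-DP, which is the claimed $(\eps',\delta')$ guarantee.
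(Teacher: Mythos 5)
Your proposal is correct and follows essentially the same route as the paper: condition on the report sizes $L$, invoke Lemma~\ref{lem:allocate} for the good event $\ltwo{L}\le n\eps_1$, reduce the conditioned permutation to swapping and treat each user's output as an overlapping mixture with weight bounded by $e^{2\eps_0}\ell_i/n$, compose via Equation~\ref{eqn:heta}, and handle approximate DP through Lemma~\ref{lem:cheu} with the $n\delta_1$ total-variation union bound. The one step you leave as a sketch --- the origin of the $e^{2\eps_0}$ factor --- is exactly the paper's Bayes computation bounding $p_i=\Pr[x_1\in\bar{D}_i\mid s_{1:i-1}]$ via the ratios $\Pr[s_{1:i-1}\mid x_1\in\bar{D}_k]/\Pr[s_{1:i-1}\mid x_1\in\bar{D}_i]\ge e^{-2\eps_0}$, and your stated per-user bound matches it.
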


\noindent\textbf{Proof sketch.}
First, consider fixing (conditioning) the number of reports held by each user (e.g., user 1 holding 2 reports, user 2 holding zero report, and so on).
This conditioned distribution (realized by Algorithm~\ref{alg:a_fix}) may be seen as a distribution consisting of all permutations of data elements but with the number of reports held by each user fixed.
One may then reduce such a uniform permutation to swapping and uses a variant of the swapping technique \cite{soda-shuffling} to bound $\eps$.
Finally, a concentration bound on the distribution of report sizes, and a bound on $P^G$ using Equation \ref{eq:pg_upper} are placed to complete the proof.
The full proof is given in Section \ref{sec:proofs}.

\noindent\textbf{Interpretations.}
In Table \ref{tab:compare}, we compare the privacy amplification of network shuffling with other existing privacy amplification mechanisms.
We make comparisons assuming $\eps_0 > 1$ for convenience, and hide the dependence on polylogs of $\delta$ and $n$.
``No amplification'' means that only local randomizer is applied to each report.
Note that subsampling and uniform shuffling mechanisms require a centralized and trusted server to achieve the amplification.
Such an entity is powerful at breaking completely the link between the user and the reports.
Network shuffling works without this advantage, but yet is still capable of achieving amplification of $O(1/\sqrt{n})$ (albeit with a weaker exponential dependence on $\eps_0$).

We show in Figure \ref{fig:realdata_timestep} how the central DP guarantees change with respect to the number of communication round/time step per user.
Here, three real-world graphs (see Table \ref{tab:nws}) with around the same number of users ($n\simeq 2\text{-}3\times 10^4$) are used to make the comparisons.
$\alpha$ is calculated to be $\simeq 10^{-2}$, and we see that privacy guarantees converge at time step around $\alpha^{-1} \log n\simeq 10^{3}$. 
Note that the total communication overhead is $n$ multiplied by the number of time step, and there is no cover traffic as assumed in Section \ref{subsec:threat}.

\begin{table}
\begin{center}
\small
    \begin{tabular}{cccccc}
        \toprule
        & \multicolumn{3}{c}{social network} & comm & web \\
          & Facebook\cite{rozemberczki2019multiscale}& Twitch\cite{rozemberczki2019multiscale}& Deezer\cite{rozemberczki2019gemsec}& Enron\cite{DBLP:conf/ceas/KlimtY04}& Google\cite{DBLP:journals/im/LeskovecLDM09} \\
        \cmidrule(rl){1-1}\cmidrule(rl){2-4}\cmidrule(rl){5-5}\cmidrule(rl){6-6}
        $n$ & 22,470& 9,498& 28,281& 33,696& 855,802 \\
        $\Gamma^G$ & 5.0064& 7.5840& 3.5633& 36.866& 20.642 \\
        \bottomrule
    \end{tabular}
    \caption{Real-world network datasets. 
    The largest connected graphs are chosen when calculating the values of $n$ and $\Gamma^G$.}
    \label{tab:nws}
\end{center}
\end{table}

\begin{figure}
\begin{center}
    \includegraphics[width=0.45\textwidth]{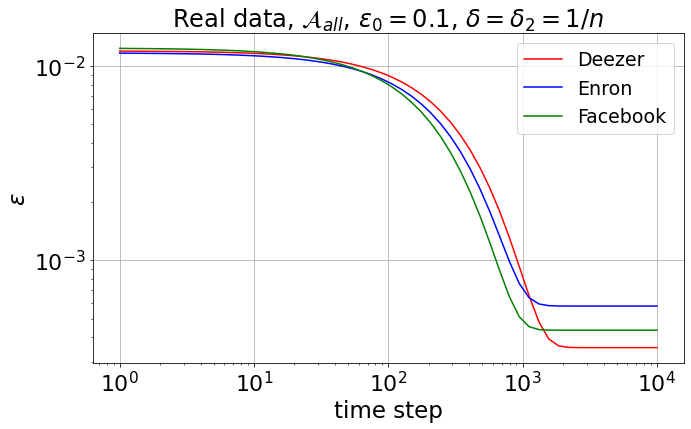}
\caption{Privacy-communication trade-offs, where it is shown that $\alpha^{-1}\log n$ time steps (around $10^3$ for selected datasets) are required for central $\eps$ to converge to the asymptotic value calculated under the stationary distribution assumption.}
\label{fig:realdata_timestep}
\end{center}
\end{figure}

\subsection{\texorpdfstring{$\calA_{all}$}{Aall} with Symmetric Distribution}
\begin{thm}[``All'' protocol, Symmetric distribution]
Let $\calA_{ldp}$ be a $\eps_0$-local randomizer. Let $\calA_{all}: \calD^n \to \calS^{(1)} \times \cdots \times \calS^{n}$ be the protocol as shown in Algorithm \ref{alg:a_all} sending all reports to the server. 
Then, $\calA_{all}$ satisfies ($\eps,\delta+\delta_2$)-DP, with
\begin{align*}
    \eps = \frac{(e^{\eps_0}-1)^2 e^{4 \eps_0}\eps_1^2} {2}
    +
    (e^{\eps_0}-1) e^{2 \eps_0}\eps_1\sqrt{2  \log{\frac{1}{\delta}}},
\end{align*}
$\eps_1 = \sqrt{(1-\frac{1}{n}){\rho^*}^2\sum\limits_{i\in[n]} {P_i^G}^2} + \sqrt{\frac{\log(1/\delta_2)}{n}}$ and $P^G$ the position probability distribution of any user when the protocol runs and stops at time step $t$.
$\rho^*$ is the ratio of the largest value of $P^G_i$ to the smallest non-zero $P^G_i$.
Moreover, if $\calA_{ldp}$ is $(\eps_0,\delta_0)$-DP for $\delta_0 \leq \frac{(1 - e^{-\eps_0}) \delta_1}{4 e^{\eps_0} \left(2 + \frac{\ln(2/\delta_1)}{\ln(1/(1-e^{-5\eps_0}))}\right)}$, $\calA_{all}$ is $(\eps',\delta')$-DP with $\eps' = \frac{(e^{8\eps_0}-1)^2 e^{32 \eps_0}\eps_1^2} {2}
    +
    \eps_1\sqrt{2 (e^{8\eps_0}-1)^2 e^{32 \eps_0} \log{\frac{1}{\delta}}}$ and $\delta' = \delta + \delta_2+ n (e^{\eps'}+1) \delta_1$.
\label{thm:all_protocol_sym}
\end{thm}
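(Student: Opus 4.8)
The plan is to follow the proof of Theorem~\ref{thm:all_protocol} almost verbatim, changing only the two ingredients that distinguish the $k$-regular (symmetric) setting from the ergodic (stationary) one. Fix adjacent datasets $D,D'$ differing in the first coordinate. On a $k$-regular graph the $t$-step transition operator $M^t=(\tfrac1k A)^t$ is symmetric, hence doubly stochastic, so: (i) $P^G(t)$ can be tracked exactly via the dynamics of Section~\ref{sec:finite}, and $\sum_{i}{P^G_i}^2$ therefore enters the final bound directly, without passing through the stationary-distribution surrogate of Equation~\ref{eq:pg_upper}; and (ii) the walk is symmetric over all nodes, so, as the excerpt notes, we may run the whole privacy analysis with respect to the first user without loss of generality.

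First I would condition on the allocation vector $L=(L_1,\dots,L_n)$, $\sum_i L_i=n$, recording how many reports each user holds after the last round (via the fixed-allocation variant of Algorithm~\ref{alg:a_all} used in the proof of Theorem~\ref{thm:all_protocol}). Conditioned on $L$, the mechanism is a permutation of the $n$ reports into user-slots followed by the sequential local randomizers $s_i\leftarrow\calA^{(i)}_{ldp}(s_{1:i-1};\cdot)$; I would reduce this permutation to a single swap of report~$1$ with a report drawn according to the position probabilities, and then decompose each user's output as an overlapping mixture $\mu=(1-p)\mu_0+p\,\mu_1$ in the manner of \cite{soda-shuffling}, with $\mu_0$ independent of $x_1$ and $\mu_1$ carrying the randomization of $x_1$; the overlapping-mixtures (subsampling-style) amplification then yields a per-user bound $\eps_i\lesssim(e^{\eps_0}-1)e^{2\eps_0}\,p_i$. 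The one genuinely new feature is that a $k$-regular graph need not be vertex-transitive, so report~$1$'s walk distribution is non-uniform and differs from those of the other reports; its worst-case spread is exactly $\rho^*=\max_i P^G_i/\min_{i:P^G_i>0}P^G_i$, and it is this factor that inflates the mixing weights relative to the uniform analysis and multiplies the structural term inside $\eps_1$.

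Finally, composing the $n$ per-user guarantees with the heterogeneous advanced composition bound~\eqref{eqn:heta}, which (using $\tfrac{(e^{\eps_i}-1)\eps_i}{e^{\eps_i}+1}\le\eps_i^2/2$) is at most $\tfrac12\sum_i\eps_i^2+\sqrt{2\log(1/\delta)\sum_i\eps_i^2}$, produces the stated $\eps$ once $\sum_i p_i^2$ is controlled; Lemma~\ref{lem:allocate} (McDiarmid on $\norm{L}$) supplies that control, turning $\sum_i p_i^2$ into $\eps_1^2$ up to the additive concentration term $\sqrt{\log(1/\delta_2)/n}$, with the event on which it fails contributing the extra $\delta_2$ and giving $(\eps,\delta+\delta_2)$-DP. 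For the approximate-DP clause I would invoke Lemma~\ref{lem:cheu} to replace $\calA_{ldp}$ by an $8\eps_0$-DP randomizer $\tilde\calA$ at total-variation cost $\delta_1$ per report, rerun the pure-DP argument with $8\eps_0$ in place of $\eps_0$ (yielding $\eps'$), and pay $n(e^{\eps'}+1)\delta_1$ in the slack through group privacy over the $n$ substituted randomizers, reaching $(\eps',\delta')$. I expect the main obstacle to be the second paragraph: making the reduction to a single swap rigorous when each report runs a distinct, not-yet-mixed random walk, and verifying that the resulting non-uniformity is absorbed by exactly the factor $\rho^*$ (rather than a larger power of it) in the mixing weight.
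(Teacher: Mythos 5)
Your proposal follows essentially the same route as the paper's own (sketched) argument: condition on the allocation vector $L=\ell$, reduce the fixed-allocation permutation to a swap in which $x_1$ lands at user $i$ with the non-uniform probability $\tfrac{\ell_i P^G_i}{\sum_k \ell_k P^G_k}$, bound this by $\rho^*\ell_i/n$ to inject the $\rho^*$ factor into the mixture weights, and then reuse the overlapping-mixtures bound, heterogeneous composition, the McDiarmid allocation lemma, and the Lemma~\ref{lem:cheu} reduction for the approximate-DP clause exactly as in Theorem~\ref{thm:all_protocol}. The step you flag as the main obstacle is precisely the one the paper resolves by writing down that conditional allocation probability, so your plan is aligned with the intended proof.
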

\noindent \textbf{Proof sketch.}
Our approach is similar to the proof of Theorem \ref{thm:all_protocol}.
The difference is in the conditioning of the output distribution with $L = \ell$ for some fixed $\ell \in [n]^n$ with $\sum_i \ell_i = n$.
For a pair of datasets $D$ and $D'$ differing in one element, we let the differing element be $x_1$ w.l.o.g..
Here, the conditioned output distribution has $x_1$ allocated at $i$ with probability $\frac{l_iP^G_i}{\sum\limits_{k\in[n]}l_kP^G_k}$, instead of uniform distribution like in Theorem \ref{thm:all_protocol}.
The rest of the calculation follows closely to those given in Section \ref{sec:proofs}.

\noindent 
\textbf{Trade-off between privacy and communication overheads.}
We show in Figure \ref{fig:regular_timestep} how the central DP guarantees change with respect to the number of communication rounds.
Perhaps not surprisingly, with larger $k$, the distribution ``mixes" faster as the walk has more choices of node to move to.
Subsequently, the privacy guarantees converge faster to the asymptotic value.
Note that we are tracing the random walk exactly in this case:
the walk exhibits non-monotonic behaviors in early times as it ``oscillates" between its neighbors without spreading out initially. 
This is in contrast to Figure \ref{fig:realdata_timestep} showing the upper bound of $\eps$ which is decreasing monotonously.

\begin{figure}
\begin{center}
    \includegraphics[width=0.45\textwidth]{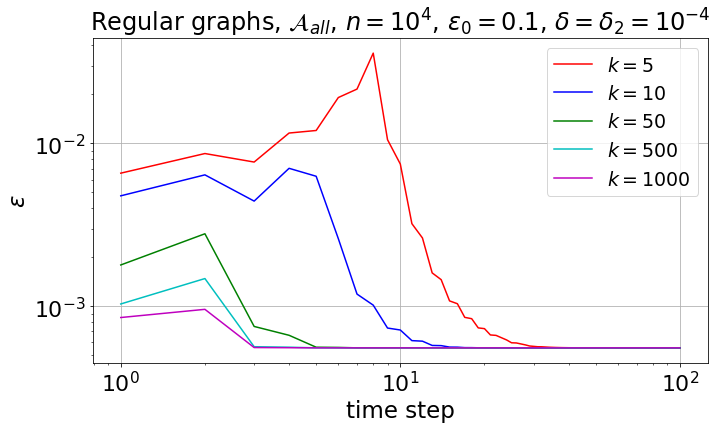}
\caption{The larger $k$ is, the faster $\eps$ converges to the asymptotic value for $k$-regular graphs.}
\label{fig:regular_timestep}
\end{center}
\end{figure}

\subsection{\texorpdfstring{$\calA_{single}$}{Asingle}}
The privacy theorems for $\calA_{single}$ with stationary and symmetric distributions are given below.
We begin with the stationary distribution:
\begin{thm}[``Single" protocol, Stationary distribution]
Let $\calA_{ldp}$ be a $\eps_0$-local randomizer. Let $\calA_{single}: \calD^n \to \calS^{(1)} \times \cdots \times \calS^{(n)}$ be the protocol as shown in Algorithm \ref{alg:a_single} sending single reports to the server.
Then, $\calA_{all}$ satisfies ($\eps,\delta$)-DP, with
\begin{align*}
    \eps = \frac{e^{2\eps_0}(e^{\eps_0}-1)^2}{2} \sum\limits_{i\in[n]} {P^G_i}^2 + e^{\eps_0}(e^{\eps_0}-1) \sqrt{2 \log{\frac{1}{\delta}} \sum\limits_{i\in[n]}{P^G_i}^2}
\end{align*}
and $\sum\limits_{i\in[n]}{P^G_i}^2 \leq \sum\limits_{i\in[n]}{\pi^G_i}^2 + (1-\alpha)^{2t}$ when the protocol runs and stops at time step $t$. $\pi^G$ is the graph stationary distribution and $\alpha$ is the graph spectral gap.
Moreover, if $\calA_{ldp}$ is $(\eps_0,\delta_0)$-DP for $\delta_0 \leq \frac{(1 - e^{-\eps_0}) \delta_1}{4 e^{\eps_0} \left(2 + \frac{\ln(2/\delta_1)}{\ln(1/(1-e^{-5\eps_0}))}\right)}$, $\calA_{all}$ is $(\eps',\delta')$-DP with $\eps' = \frac{e^{16\eps_0}(e^{8\eps_0}-1)^2}{2} \sum\limits_{i\in[n]} {P^G_i}^2 + e^{8\eps_0}(e^{8\eps_0}-1) \sqrt{2 \log{\frac{1}{\delta}} \sum\limits_{i\in[n]}{P^G_i}^2}$ and $\delta' = \delta + \delta_2+ n (e^{\eps'}+1) \delta_1$.
Particularly, when $\eps_0 \leq 1$, we obtain
$\eps' = 800 \eps_0^2 \sum\limits_{i\in[n]} {P^G_i}^2 + 40\eps_0 \sqrt{2 \log{\frac{1}{\delta}} \sum\limits_{i\in[n]}{P^G_i}^2}$
 and $\delta' = \delta + \delta_2+ n (e^{\eps'}+1) \delta_1$.
\label{thm:single_protocol}
\end{thm}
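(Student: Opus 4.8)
The plan is to port the coordinate-wise amplification argument of \cite{soda-shuffling} to the decentralised setting; because in $\calA_{single}$ every user forwards \emph{exactly one} report, the mixture decomposition one needs falls out directly, so (unlike $\calA_{all}$) no explicit swapping step and no concentration bound on the allocation vector are required. Fix adjacent $D,D'$ differing in the entry held by user~$1$, set $\nu:=\calA_{ldp}(x_1)$ and $\nu':=\calA_{ldp}(x_1')$, and condition on the combinatorial configuration $C$ of a run: the random-walk trajectories of all $n$ reports together with, at each node, the \emph{position index} of the report that node will forward to the server. Trajectories and received-multiset sizes are functions of node identities only, never of report contents, so the law of $C$ is identical under $D$ and $D'$. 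Given $C$, the report originating from user~$1$ is forwarded by at most one user, $\iota(C)$, so marginally the $i$-th output is the mixture $(1-q_i)\mu_0^{(i)}+q_i\nu$ under $D$ and $(1-q_i)\mu_0^{(i)}+q_i\nu'$ under $D'$, where $q_i:=\Pr[\iota(C)=i]$ and the ``blanket'' part $\mu_0^{(i)}$ (a mixture of laws $\calA_{ldp}(x_j)$, $j\ne1$, and of the dummy $\calA_{ldp}(\mathbf{0})$) are both $D$/$D'$-independent, and $q_i\le\Pr[\text{the report originating from user }1\text{ sits at node }i\text{ at step }t]=P^G_i$.

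The amplification is then extracted per coordinate. Since $\calA_{ldp}$ is $\eps_0$-LDP, every component of $\mu_0^{(i)}$ — including the dummy term — pointwise dominates $\nu'$ up to a factor $e^{-\eps_0}$, hence $(1-q_i)\mu_0^{(i)}+q_i\nu'\ge e^{-\eps_0}\nu'$; together with $\nu\le e^{\eps_0}\nu'$ this gives the pointwise density-ratio bound $e^{\eps_i}\le 1+q_i\,e^{\eps_0}(e^{\eps_0}-1)$, so coordinate $i$ is $\eps_i$-DP with $\eps_i\le e^{\eps_0}(e^{\eps_0}-1)P^G_i$ via $\log(1+x)\le x$. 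Conditioned on $C$ the coordinates are independent and only $\iota(C)$ depends on the differing entry, so — as in \cite{soda-shuffling} — these per-coordinate pure-DP guarantees compose through the heterogeneous advanced composition theorem \cite{KOV17} (Equation~\ref{eqn:heta}). Substituting $\eps_i\le e^{\eps_0}(e^{\eps_0}-1)P^G_i$, bounding $\frac{(e^{\eps_i}-1)\eps_i}{e^{\eps_i}+1}\le\frac{\eps_i^2}{2}$ in the first summand and using $\sum_i\eps_i^2\le e^{2\eps_0}(e^{\eps_0}-1)^2\sum_i{P^G_i}^2$ in both, reproduces the stated $\eps$. The inequality $\sum_i{P^G_i}^2\le\sum_i{\pi^G_i}^2+(1-\alpha)^{2t}$ is exactly Equation~\ref{eq:pg_upper} applied to the walk started from user~$1$ (so $c_1=1/\sqrt n$ and $\sum_{i\ge2}c_i^2\le1$), and since $\sum_i{\pi^G_i}^2=1/n$ for a regular graph (and is controlled by $\Gamma^G$ in general) this is what yields the $O(1/\sqrt n)$ rate.

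For the approximate-LDP statement I would invoke Lemma~\ref{lem:cheu}: replace each $\calA_{ldp}$ by the $8\eps_0$-DP randomizer $\tilde\calA_{ldp}$ lying within total-variation $\delta_1$ of it, couple the (at most $n$) reports reaching the server so that the two runs of $\calA_{single}$ agree except with probability $n\delta_1$, run the pure-DP argument above with $8\eps_0$ in place of $\eps_0$, and convert the coupling mismatch back into DP at the usual cost of $(1+e^{\eps'})\,n\delta_1$ in the $\delta$ parameter, giving $\eps'$ and $\delta'$ as stated; the displayed $\eps_0\le1$ specialisation then follows by elementary bounds on the resulting prefactors. I expect the main obstacle to be not any single inequality but making the conditioning watertight — verifying that conditioning on $C$ really does leave the $n$ coordinates conditionally independent with dependence on the differing entry confined to coordinate $\iota(C)$, that the weights $q_i$ and the blanket parts $\mu_0^{(i)}$ are genuinely content-agnostic, and that the dummy reports do not spoil the $e^{-\eps_0}$ blanket lower bound — so that the heterogeneous composition theorem of \cite{KOV17} is legitimately applicable coordinate by coordinate; the remaining manipulations are routine.
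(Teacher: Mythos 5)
Your overall architecture --- a two-component mixture per output coordinate with weight at most $P^G_i$, a per-coordinate bound $\eps_i \lesssim e^{O(\eps_0)}(e^{\eps_0}-1)P^G_i$, heterogeneous composition via Equation~\ref{eqn:heta}, the bound on $\sum_i {P^G_i}^2$ from Equation~\ref{eq:pg_upper}, and Lemma~\ref{lem:cheu} for the approximate-DP extension --- has the right shape and matches what the paper intends (the paper only sketches this theorem, via the random-replacement reduction of \cite{DBLP:conf/nips/BalleKMTT20} followed by the calculation of Section~\ref{sec:proofs}), and the constants you produce agree with the statement.

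However, there is a genuine gap at exactly the point you flag: the composition step. The theorem of \cite{KOV17} composes mechanisms that are $\eps_i$-DP \emph{conditionally on the earlier outputs}, whereas the guarantee you derive for coordinate $i$ is a \emph{marginal} one, with the prior weight $q_i=\Pr[\iota(C)=i]\le P^G_i$. Conditioning on $C$ does not repair this: given $C$ the mixture collapses (coordinate $\iota(C)$ outputs $\nu$ with probability one, so the conditional guarantee is the unamplified $\eps_0$), and all of the amplification lives in the adversary's residual uncertainty about $\iota(C)$, which is updated by each revealed output. What is needed is a uniform bound on the posterior $\Pr[\iota(C)=i\mid s_{1:i-1}]$; this is precisely the Bayes computation carried out in the paper's proof of Theorem~\ref{thm:fix}, where it costs an extra factor $e^{2\eps_0}$ over the prior $\ell_i/n$ via likelihood-ratio bounds. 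Note also that your per-coordinate bound already spends an $e^{\eps_0}$ on the crude blanket inequality $\mu'\ge e^{-\eps_0}\nu'$; once the unavoidable posterior correction is added you would overshoot the theorem's constant $e^{\eps_0}(e^{\eps_0}-1)$ unless you replace that crude step by the advanced joint convexity of overlapping mixtures (Theorem~2 of \cite{BBG18}, as used in Section~\ref{sec:proofs}), which yields $\eps_i=\log\left(1+p_i(e^{\eps_0}-1)\right)$ with no extra $e^{\eps_0}$. As written, your two simplifications cancel numerically, which is a coincidence rather than a derivation; the posterior bound (and the attendant one-position likelihood-ratio argument specific to $\calA_{single}$, where each user emits exactly one genuine or dummy report) is the missing core of the proof.
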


For the symmetric distribution, the privacy guarantees for the ``single" protocol turn out to be almost the same as Theorem \ref{thm:single_protocol}:
\begin{thm}[``Single" protocol, Symmetric distribution]
Let $\calA_{ldp}$ be a $\eps_0$-local randomizer.
$\calA_{single}$ satisfies ($\eps,\delta$)-DP, with $\eps$ equal to the one given in Theorem \ref{thm:single_protocol}, except that $P^G$ is the position probability distribution of any user when the protocol runs and stops at time step $t$.
Moreover, if $\calA_{ldp}$ is $(\eps_0,\delta_0)$-DP for $\delta_0 \leq \frac{(1 - e^{-\eps_0}) \delta_1}{4 e^{\eps_0} \left(2 + \frac{\ln(2/\delta_1)}{\ln(1/(1-e^{-5\eps_0}))}\right)}$, $\calA_{single}$ is $(\eps',\delta')$-DP with $\eps',\delta'$ also equal to the ones given in Theorem \ref{thm:single_protocol} except that $P^G$ is the position probability distribution of any user at time step $t$.
\label{thm:single_protocol_sym}
\end{thm}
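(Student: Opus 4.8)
The plan is to re-run the proof of Theorem \ref{thm:single_protocol} almost verbatim, changing only how the position probability distribution $P^G$ is controlled. First I would fix a pair of adjacent datasets $D,D'$ differing at index $1$, write $s_1=\calA_{ldp}(x_1)$ and $s_1'=\calA_{ldp}(x_1')$, let $L_i$ be the number of reports held by user $i$ at the final round, and couple the two executions of $\calA_{single}$ so that all routing (random-walk) decisions and all final per-user sampling coins are shared --- only the report produced at index $1$ by $\calA_{ldp}$ may differ, which is legitimate because in Algorithm \ref{alg:a_single} report contents never influence how reports are relayed. Under this coupling the report originating at user $1$ occupies the same node $Z$ in both executions, every other report occupies the same node, and user $i$'s output agrees across $D$ and $D'$ unless $i=Z$ and user $Z$ samples precisely that report. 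Hence the output marginal at coordinate $i$ is an overlapping mixture $(1-q_i)\mu_{i,0}+q_i\mu_{i,1}$, where $\mu_{i,0}$ does not depend on the index-$1$ element, $\mu_{i,1}$ is the law of $\calA_{ldp}(x_1)$ (resp.\ of $\calA_{ldp}(x_1')$), and $q_i=\Pr[Z=i]\cdot\E[1/L_i\mid Z=i]\le\Pr[Z=i]=P^G_i$.

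Next I would bound the per-coordinate amplified parameter $\eps_i$ in terms of $q_i$ by the overlapping-mixtures / subsampling argument of \cite{soda-shuffling,BBG18} (identical to the one used for Theorem \ref{thm:single_protocol}), and compose the $n$ coordinates via the heterogeneous advanced composition bound of Equation \ref{eqn:heta}. Both terms of that bound are controlled using only $\sum_{i}\eps_i^2$, and each $\eps_i$ is governed by $q_i\le P^G_i$, so the resulting $\eps$ depends on the graph only through $\sum_{i\in[n]}{P^G_i}^2$ --- precisely the form stated in Theorem \ref{thm:single_protocol}. The approximate-DP statement then follows, as there, by first applying Lemma \ref{lem:cheu} to trade the $(\eps_0,\delta_0)$-DP randomizer for an $8\eps_0$-DP one at total-variation cost at most $n\delta_1$ across the $n$ outputs, which replaces $\eps_0$ by $8\eps_0$ in the bound and inflates $\delta$ by $n(e^{\eps'}+1)\delta_1$; the specialization to $\eps_0\le 1$ is the same arithmetic as in Theorem \ref{thm:single_protocol}.

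The only genuine difference from Theorem \ref{thm:single_protocol} is which distribution is substituted for $P^G$. In the stationary-distribution setting one uses the slack bound $\sum_{i\in[n]}{P^G_i}^2\le\sum_{i\in[n]}{\pi^G_i}^2+(1-\alpha)^{2t}$ from Equation \ref{eq:pg_upper}. In the symmetric ($k$-regular) setting we instead keep $P^G(t)$ as the \emph{exact} time-$t$ position distribution of a single report --- the same quantity appearing in the symmetric-distribution analysis of Theorem \ref{thm:all_protocol_sym}, well defined for every user by the symmetry of the setting --- so that $\sum_{i\in[n]}{P^G_i}^2$ is evaluated at this exact $P^G$, with no deviation-from-stationarity term. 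Substituting this into the expressions of Theorem \ref{thm:single_protocol} yields the stated $\eps$, $\eps'$ and $\delta'$.

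I expect the main obstacle to be the first paragraph: confirming that the per-user uniform sampling step of Algorithm \ref{alg:a_single}, together with the correlation between the landing node $Z$ of $s_1$ and the final loads, still leaves a clean overlapping-mixture decomposition with weight $q_i\le P^G_i$. The point is that, because each user emits exactly one report, the weight with which coordinate $i$ carries information about $x_1$ is bounded by $\Pr[Z=i]=P^G_i$ \emph{directly}, with the $1/L_i$ sampling absorbing the bias toward heavily loaded nodes; this avoids the load-dependent reweighting $\frac{\ell_iP^G_i}{\sum_k\ell_kP^G_k}$, and hence the extra irregularity factor $\rho^*$, that was needed in the $\calA_{all}$ symmetric analysis (Theorem \ref{thm:all_protocol_sym}). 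Once this decomposition is in hand, the amplification and composition steps are identical to those already carried out in Section \ref{sec:proofs}.
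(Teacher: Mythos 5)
Your overall architecture is the right one and, as far as the paper reveals, matches its own: the paper's argument for $\calA_{single}$ is only a two-sentence sketch that reduces the protocol to a ``random replacement'' step (re-inserting $x_1$ at a random coordinate chosen according to a certain distribution) and then reuses the overlapping-mixtures, advanced-joint-convexity and heterogeneous-composition machinery of Section~\ref{sec:proofs}. Your coupling of the two executions is a legitimate substitute for that reduction and produces the same per-coordinate mixture decomposition, and your handling of the symmetric case --- substituting the exact time-$t$ position distribution of the $k$-regular walk for the stationary-distribution bound, with no $\rho^*$ factor --- is exactly what distinguishes Theorem~\ref{thm:single_protocol_sym} from Theorem~\ref{thm:single_protocol}.

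The one step that does not go through as written is the composition. You decompose the \emph{marginal} law of coordinate $i$ as $(1-q_i)\mu_{i,0}+q_i\mu_{i,1}$ with $q_i\le\Pr[Z=i]=P^G_i$ and then feed the resulting $\eps_i$ into the heterogeneous advanced composition bound of Equation~\ref{eqn:heta}. That bound applies to adaptively composed mechanisms: each coordinate must be $\eps_i$-DP \emph{conditioned on} the previously released outputs $s_{1:i-1}$, so the relevant mixture weight is $\Pr[Z=i\mid s_{1:i-1}]$, not the prior $\Pr[Z=i]$. Exactly as in the $\calA_{all}$ proof of Section~\ref{sec:proofs}, a Bayes-rule argument is needed to control how much observing $s_{1:i-1}$ can tilt the posterior on $Z$; for the single-report protocol the two conditionings $Z=k$ and $Z=i$ change at most one already-released coordinate, costing a factor $e^{\eps_0}$, which is precisely why the stated formula reads $e^{\eps_0}(e^{\eps_0}-1)\sqrt{2\log(1/\delta)\sum_i{P^G_i}^2}$ rather than $(e^{\eps_0}-1)\sqrt{2\log(1/\delta)\sum_i{P^G_i}^2}$. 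Your claim that $q_i\le P^G_i$ ``directly'' would yield a strictly smaller $\eps$ than the theorem states, which is the tell-tale sign that the conditioning step has been dropped; inserting $\Pr[Z=i\mid s_{1:i-1}]\le e^{\eps_0}P^G_i$ lands the calculation exactly on the stated expression. With that repair, the remainder of your sketch --- the overlapping-mixtures amplification, the composition, and the approximate-DP clause via Lemma~\ref{lem:cheu} with the $n\delta_1$ total-variation union bound --- is sound.
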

\label{sec:proof_ampl_dist}
\noindent \textbf{Proof sketch.}
In order to prove the privacy guarantees of $\calA_{single}$ (Algorithm~\ref{alg:a_single}), we utilize random replacement \cite{DBLP:conf/nips/BalleKMTT20}: we reduce $\calA_{single}$ to an algorithm which works as follows: Substitute the first element in the dataset with a pre-determined element, and choose an element in the dataset (according to a certain probability distribution) to substitute it with the original first element.
Then, all elements are randomized with the local randomizer. 
The rest of the calculation follows similarly to that given in Section \ref{sec:proofs}.
 \footnote{The communication overheads of $\mathcal{A}_{single}$ have the same trend as those of $\mathcal{A}_{all}$ and are therefore not shown for brevity.}

\subsection{Numerical Analyses}
Here, we provide several numerical analyses based on our privacy theorems.
In the following, unless stated otherwise, we always assume that the network shuffling protocol runs and stop at time $t= \lfloor\alpha^{-1}\log n\rceil$ (mixing time), and present the results correct at this $t$.
Table \ref{tab:nws} shows values of $n$ and $\Gamma^G$'s for five real-world network datasets.
We see that social networks, which are our main use case, have reasonably regular structure ($\Gamma^G \lesssim 10$) compared to other networks, e.g., Google web graph.
The dependence of the amplified $\eps$ on $\eps_0$ for the real-world datasets is shown in Figure \ref{fig:th44}.
It can be seen that, nevertheless, population size matters the most as Google with $n\simeq 10^6$ yields the most significant privacy amplification.

Figure \ref{fig:real_all_single} compares the level of privacy amplification of $\mathcal{A}_{all}$ and $\mathcal{A}_{single}$.
Plots are made using two datasets (Twitch and Google) which differ significantly in $n$ (9,498 and 855,802 respectively).
Furthermore, in Figure \ref{fig:th44th}, we show the dependence of the amplified $\eps$ on $\eps_0$ varying the relevant parameters ($n$, $\Gamma^G$ and protocol) without assumptions on the underlying dataset, but at the limit of stationary distribution, in constrast to previous analyses.

\begin{figure}
\begin{center}
    \includegraphics[width=0.45\textwidth]{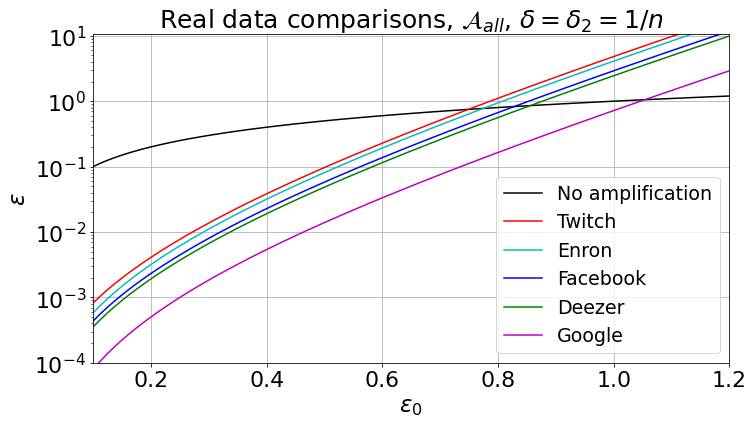}
\caption{
The Google dataset with the largest population achieves the most significant privacy amplification for $\eps_0$ ranging from 0.1 to 1.2 calculated assuming the $\mathcal{A}_{all}$ protocol.}
\label{fig:th44}
\end{center}
\end{figure}

\begin{figure}
\begin{center}
    \includegraphics[width=0.45\textwidth]{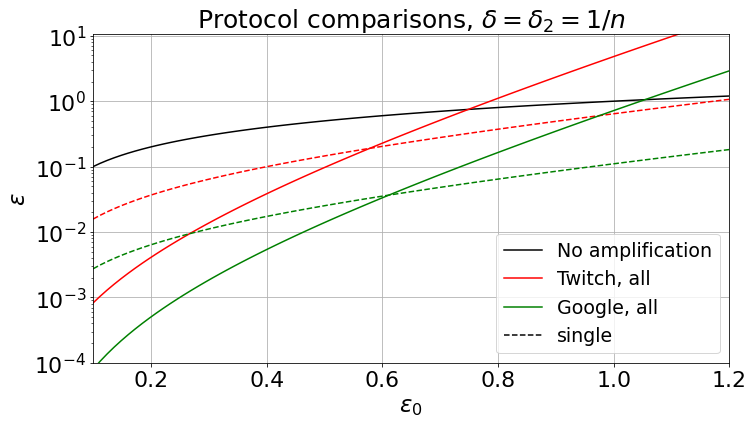}
\caption{Comparison of central $\eps$ under protocol $\mathcal{A}_{all}$ (continuous line) and $\mathcal{A}_{single}$ (dashed line), showing that $\mathcal{A}_{single}$ achieves larger privacy amplification at large $\eps_0$.}
\label{fig:real_all_single}
\end{center}
\end{figure}

\begin{figure}
\begin{center}
    \includegraphics[width=0.45\textwidth]{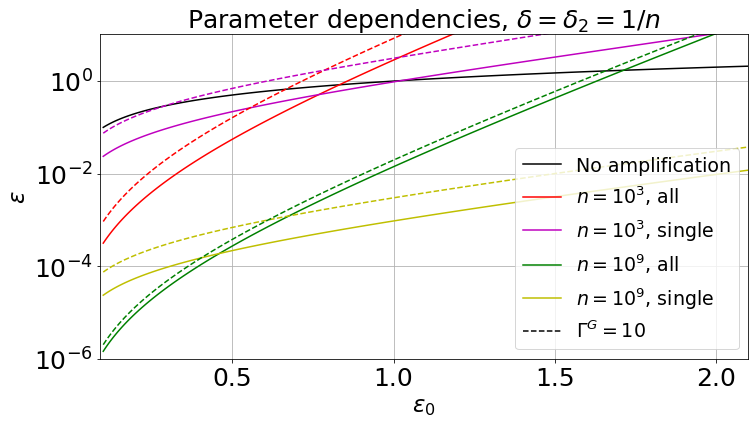}
\caption{
Stationary distribution's parameter dependencies, comparing central $\eps$ for various $\Gamma^G$, $n$ and protocols for $\eps_0$ ranging from 0.2 to 2.0.
The black line corresponds to $\eps=\eps_0$, i.e., no amplification. Other colored continuous and dashed lines correspond to scenarios where the underlying graphs are such that $\Gamma^G=1$ (regular) and $\Gamma^G=10$, respectively.}
\label{fig:th44th}
\end{center}
\end{figure}

Notice (e.g., in Figure \ref{fig:real_all_single}) that at large $\eps_0$, using $\mathcal{A}_{single}$ gives better privacy amplification.
However, we should not conclude that utility-wise $\mathcal{A}_{single}$ is always better at large $\eps_0$.
This is because, as mentioned in Section \ref{subsec:protocol}, $\mathcal{A}_{single}$ does not send all reports truthfully (user not holding any report sends a dummy report; user holding multiple reports sends only a report), leading to potential utility losses.
To demonstrate this, we study a privacy-utility trade-off scenario by performing mean estimation on the Twitch dataset following a setup similar to \cite{DBLP:conf/nips/ChenKO20}: we generate $d$-dimensional synthetic samples independently but non-identically; we set $z_1, \dots, z_{n/2}\overset{i.i.d.}{\sim} N(1,1)^{\otimes d}$ and $z_{n/2}, \dots, z_{n}\overset{i.i.d.}{\sim} N(10,1)^{\otimes d}$.
Each sample is normalized by setting $x_i = z_i/\|z_i\|_2$.
Additionally, we generate dummy sample (as required by $\mathcal{A}_{single}$) by setting $z\overset{i.i.d.}{\sim} N(5,1)^{\otimes d}$.
$d$ is set to be 200.

The \texttt{PrivUnit} \cite{DBLP:journals/corr/abs-1812-00984} algorithm is applied to perturb each report for obtaining $\eps_0$-LDP guarantees.
The utility of mean estimation is quantified by the mean squared loss or the $l_2$ error of $x$.
The number of dummy data is determined as the expected number of user holding less than a sample (7,080 for Twitch).

In Figure \ref{fig:util}, we plot the relationship between the central $\eps$ and the expected squared error.
This is done by sampling a few points of $\eps_0$, applying \texttt{PrivUnit} to the data, calculating the central $\eps$ and the expected squared error according to the protocols.
It can be observed that at least in the studied parameter region, for any fixed value of $\eps$, the expected squared error of $\mathcal{A}_{all}$ is consistently smaller than that of $\mathcal{A}_{single}$, serving as a counter-example to the argument that $\mathcal{A}_{single}$ is better at large privacy region. 
In general, the utility-privacy trade-offs of $\mathcal{A}_{all}$ and $\mathcal{A}_{single}$ are scenario and dataset-dependent.

\begin{figure}
\begin{center}
    \includegraphics[width=0.45\textwidth]{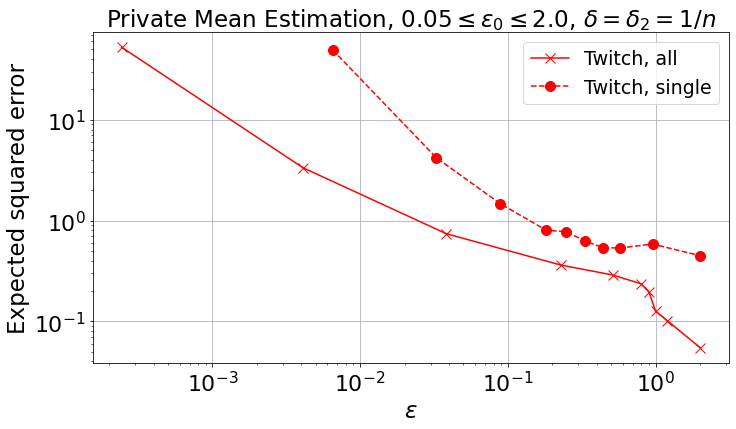}
\caption{The $\mathcal{A}_{all}$ protocol (continuous line) consistently performs better utility-wise than the $\mathcal{A}_{single}$ protocol (dashed line) for various values of $\eps$ when performing private mean estimation on the Twitch dataset.}
\label{fig:util}
\end{center}
\end{figure}

\section{Detailed Privacy Analysis of \texorpdfstring{$\calA_{all}$}{Aall} with Stationary Distribution}
\label{sec:proofs}
In this section, we give a detailed proof of the privacy theorem of $\calA_{all}$ with stationary distribution, Theorem \ref{thm:all_protocol}.
The proofs of other scenarios are omitted due to space limit and will be given elsewhere.

Let $L = (L_1, \ldots, L_n)$ be the report sizes, i.e., the number of reports received by each user, $L_i = |S_i|$ for $i \in [n]$ as in Algorithm~\ref{alg:a_fix}, $\calA_{fix}(D)$.
The output distribution of $\calA_{fix}(D)$ may be considered to be a distribution conditioned on $L = \ell$ for some fixed $\ell \in [n]^n$ with $\sum_i \ell_i = n$.
Fixing the position probability distribution, the output distribution is then the same as the one produced by Algorithm~\ref{alg:a_fix} with report sizes $\ell$, but consisting of all permutations $\pi(D)$ of the original dataset $D$.
This can be viewed as a variant of shuffle model with fixed report sizes, and can be analyzed by reducing random permutation/uniform shuffling to swapping \cite{soda-shuffling}.
For a pair of datasets $D$ and $D'$ differing in the first record, according to this reduction, it is sufficient to analyze $\calA_{fix}(\sigma(D), \ell)$, where $\sigma(D)$ is a procedure swapping $x_1$ with $x_i$ for $i$ uniformly sampled from $[n]$.
We henceforth prove the following theorem.

\begin{algorithm}[ht]
\caption{$\calA_{fix}$: Local responses with fixed report sizes}
\begin{algorithmic}[1]
    \STATE \textbf{Input:} Dataset $D=x_{1:n}$, report sizes $\ell \in [n]^n$ with $\sum_i \ell_i = n$, local randomizer $\calA_{ldp}: \calS^{(1)} \times \cdots \times \calS^{(i-1)} \times \calD \rightarrow \calS^{(i)}$ for $i\in [n]$. %
    \STATE $j \gets 1$
    \FOR{$i \in [n]$}
    \STATE $S_i \gets \{\}$
    \ENDFOR
    \FOR{$i \in [n]$}
        \IF{$\ell_i = 0$}
            \STATE do nothing
        \ELSE
            \FOR{$k \in \{j, \ldots, j + \ell_i - 1\}$}
                \STATE $s_k \gets \calA_{ldp}(x_k)$
                \STATE add  $s_k$ to $S_i$
            \ENDFOR
            \STATE $j \gets j + \ell_i$
        \ENDIF
    \ENDFOR
    \STATE \textbf{return} sequence $S_{1:n}$
\end{algorithmic}
\label{alg:a_fix}
\end{algorithm}

\begin{thm}\label{thm:fix}
Let $\calA^{(i)}_{ldp}: \mathcal{S}^{(1)} \times \cdots \times \mathcal{S}^{(i - 1)} \times \calD \rightarrow \mathcal{S}^{(i )}$ for $i \in [n]$ be an $\eps_0$-DP local randomizer.
Let $\ell \in [n]^n$ s.t. $\sum_i \ell_i = n$.
We also define $\sigma(D)$ to be the swapping operation on dataset $D = \{x_1, \ldots, x_n\}$, where $x_1$ is swapped with $x_i$ for $I$ uniformly sampled from $[n]$.
Then, $M(D) = \calA_{fix}(\sigma(D), \ell)$ is $(\eps,\delta+\delta_2)$-DP at index $1$ with $\eps = \frac{(e^{\eps_0}-1)^2 e^{4 \eps_0}\eps_1^2} {2}
    +
    \eps_1\sqrt{2 (e^{\eps_0}-1)^2 e^{4 \eps_0} \log{\frac{1}{\delta}}}$, where $\eps_1 = \sqrt{(n-1)\sum\limits_{i\in[n]} {P_i^G}^2} + \sqrt{\frac{\log(1/\delta_2)}{n}}$ and
$\sum\limits_{i\in[n]}{P^G_i}^2 \leq \sum\limits_{i\in[n]}{\pi^G_i}^2 + (1-\alpha)^{2t}$ when the protocol runs and stops at time step $t$. $\pi^G$ is the graph stationary distribution and $\alpha$ is the graph spectral gap.
Moreover, if $\calA_{ldp}$ is $(\eps_0,\delta_0)$-DP for $\delta_0 \leq \frac{(1 - e^{-\eps_0}) \delta_1}{4 e^{\eps_0} \left(2 + \frac{\ln(2/\delta_1)}{\ln(1/(1-e^{-5\eps_0}))}\right)}$, $M(D) = \calA_{fix}(\sigma(D), \ell)$ is $(\eps',\delta')$-DP at index $1$ with $\eps' = \frac{(e^{8\eps_0}-1)^2 e^{32 \eps_0}\eps_1^2} {2}
    +
    \eps_1\sqrt{2 (e^{8\eps_0}-1)^2 e^{32 \eps_0} \log{\frac{1}{\delta}}}$ and $\delta' = \delta + \delta_2+ n (e^{\eps'}+1) \delta_1$.

\end{thm}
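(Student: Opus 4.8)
The plan is to prove the statement via the swap reduction of~\cite{soda-shuffling}, a per-user overlapping-mixtures amplification, and heterogeneous advanced composition, and then to substitute the bounds on $\ltwo{L}$ and on $\sum_i {P^G_i}^2$ supplied by Lemma~\ref{lem:allocate} and Equation~\ref{eq:pg_upper}. Fix the report sizes $\ell$ and a pair $D,D'$ differing only at index~$1$; write $a = x_1$, $b = x_1'$. The operation $\sigma$ places the differing element at a position $I$ drawn uniformly from $[n]$, and $\calA_{fix}$ then deals the $n$ positions to users in consecutive blocks of sizes $\ell_1,\dots,\ell_n$, so $a$ (resp.\ $b$) lands in user $i$'s set $S_i$ with probability $\ell_i/n$. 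After also fixing the permutation of the remaining $n-1$ elements — which the swapping reduction permits, since only the element at index~$1$ is varied — every slot other than the one holding $a$/$b$ is filled identically under $D$ and $D'$. Conditioning on which user receives the differing element therefore exhibits each pair $S_i \mid D$, $S_i \mid D'$ as a mixture $(1-\tfrac{\ell_i}{n})\,\eta^{(i)} + \tfrac{\ell_i}{n}\,\nu^{(i)}_{\bullet}$ with a \emph{common} component $\eta^{(i)}$ (the event $a,b \notin S_i$) and a weight-$\tfrac{\ell_i}{n}$ component in which $\nu^{(i)}_a$ and $\nu^{(i)}_b$ differ only through a single application of an $\eps_0$-DP local randomizer, hence are $\eps_0$-indistinguishable.

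The next step is the per-user amplification. Since the two uncertain components sit over a shared component and are $\eps_0$-close, the advanced joint convexity of~\cite{BBG18} together with the privacy-blanket decomposition of~\cite{soda-shuffling} gives a per-user guarantee: viewing the $\ell_i$ local randomizers feeding $S_i$ as one adaptive mechanism, $S_i$ is $\eps_{(i)}$-DP at index~$1$ with $\eps_{(i)}$ of order $\tfrac{\ell_i}{n}(e^{\eps_0}-1)e^{2\eps_0}$, the $e^{2\eps_0}$ coming from bounding the relevant likelihood ratios / swap probability inside the blanket argument. Because $\calA_{fix}$ processes the user-blocks in order and each $\calA^{(i)}_{ldp}$ sees the earlier outputs, these per-user guarantees compose adaptively, so I would feed $\eps_{(1)},\dots,\eps_{(n)}$ into the heterogeneous advanced composition theorem~\cite{KOV17} (Equation~\ref{eqn:heta}). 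Using $\tfrac{(e^{\eps_{(i)}}-1)\eps_{(i)}}{e^{\eps_{(i)}}+1} \le \tfrac12 \eps_{(i)}^2$ in the first term and $\sum_i \eps_{(i)}^2 \le (e^{\eps_0}-1)^2 e^{4\eps_0}\sum_i (\ell_i/n)^2 = (e^{\eps_0}-1)^2 e^{4\eps_0}\,\ltwo{\ell}^2/n^2$ throughout reproduces the stated $\eps$ with $\eps_1$ equal to $\ltwo{\ell}$ divided by the appropriate power of $n$ (up to the lower-order term). Restricting to the event of probability at least $1-\delta_2$ on which Lemma~\ref{lem:allocate} holds then replaces $\ltwo{\ell}$ by the quoted expression in $\sum_i {P^G_i}^2$, at the cost of the extra $\delta_2$ in the $\delta$-budget, and Equation~\ref{eq:pg_upper} (itself following from Equation~\ref{eq:l2}) gives $\sum_i {P^G_i}^2 \le \sum_i {\pi^G_i}^2 + (1-\alpha)^{2t}$, completing the $(\eps,\delta+\delta_2)$-DP claim.

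For the approximate-DP version I would first invoke Lemma~\ref{lem:cheu}: when $\calA_{ldp}$ is $(\eps_0,\delta_0)$-DP with the stated bound on $\delta_0$, there is an $8\eps_0$-DP local randomizer within total-variation distance $\delta_1$ of it. Replacing each $\calA^{(i)}_{ldp}$ by this surrogate and rerunning the argument above with $\eps_0 \mapsto 8\eps_0$ gives the pure bound $\eps' = \tfrac{(e^{8\eps_0}-1)^2 e^{32\eps_0}\eps_1^2}{2} + \eps_1\sqrt{2(e^{8\eps_0}-1)^2 e^{32\eps_0}\log(1/\delta)}$; a union bound over the $n$ substituted randomizers, each of which moves the output distribution by at most $\delta_1$ in total variation and hence contributes at most $(e^{\eps'}+1)\delta_1$ to the additive slack, yields $\delta' = \delta + \delta_2 + n(e^{\eps'}+1)\delta_1$.

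I expect the crux to be the per-user amplification step: faithfully adapting the swap-plus-blanket analysis of~\cite{soda-shuffling} to the present setting, where the ``shuffled buckets'' $S_i$ have unequal sizes $\ell_i$ and the observation is the tuple of unordered sets $(S_1,\dots,S_n)$ rather than one permuted sequence, and in particular pinning down the linear-in-$\ell_i/n$ dependence with the right $e^{2\eps_0}$ constant — this is precisely what makes the composed bound scale with $\ltwo{\ell}^2/n^2 = \tfrac1n \Gamma^G$, so that amplification degrades gracefully with graph irregularity rather than collapsing. A secondary, more routine obstacle is the bookkeeping that promotes a guarantee holding ``for every $\ell$ in a high-probability set'' to an unconditional $(\eps,\delta+\delta_2)$-DP bound for the mixture over $\ell$, together with the matching total-variation and union-bound accounting in the approximate-DP statement.
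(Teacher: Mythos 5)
Your plan follows the paper's proof essentially step for step: the swap reduction of the fixed-report-size mechanism, the overlapping-mixtures/advanced-joint-convexity argument giving a per-user $\eps_{(i)} \le \log\bigl(1 + e^{2\eps_0}(e^{\eps_0}-1)\ell_i/n\bigr)$, heterogeneous advanced composition, the Lemma~\ref{lem:allocate} concentration bound on $\ltwo{\ell}$, and the Lemma~\ref{lem:cheu} total-variation surrogate for the approximate-DP case. The one step you flag as the crux is resolved in the paper exactly as you anticipate: the mixture weight must be taken as the \emph{posterior} $p_i = \Pr[x_1 \in \bar{D}_i \mid s_{1:i-1}]$ rather than the prior $\ell_i/n$, and a Bayes computation with the LDP likelihood-ratio bounds $e^{-2\eps_0}$ (for $k<i$) and $e^{-\eps_0}$ (for $k>i$) yields $p_i \le e^{2\eps_0}\ell_i/n$, which is where the $e^{2\eps_0}$ constant enters.
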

\begin{proof}
We begin with the case where the local randomizer $\calA_{ldp}$ is $\eps_0$-DP.
Since the output of the algorithm is a sequence $S$ of length $n$, we analyze the privacy of each of the elements of $S$ and apply adaptive composition on them.
We first let $y_1, \ldots, D_{n}$ denote $n$ disjoint subsets of $D$ where the size is $|y_i| = \ell_i$ for $i\in [n]$.
Also let $\bar{D}_1, \ldots, \bar{D}_{n}$ denote $n$ disjoint subsets of $D$ after the swapping operation ($\sigma(D)$), where $\bar{D}_1=(\bar{x_1}, \ldots, \bar{x}_{\ell_1})$ and so on.
Finally, we let $D'$ be the dataset differing in the first element compared to $D$.

We consider a mechanism, $\calB^{(i)}$ for $i \in [n]$, that takes the whole (swapped) dataset and previous outputs $s_{1:i-1}$ as input, and outputs $s_i$ with internal randomness independent of $s_{1:i-1}$.
For convenience, we write $\calA^{(i)}( s_{1:i-1};x_i)= (\calA_{ldp}(\bar{x}_{k}; s_{1:i-1}),\ldots)$ where $(\ldots)$ includes all $\bar{x}_k \in \bar{D}_i$.

$\calB^{(i)}$ can be seen as a mixture of two distributions conditioned on whether $x_1 \in \bar{D}_i$.
For notational convenience, let $\mu$ be the probability distribution of $\calB^{(i)}$. Rewriting $\mu$ as $\mu = (1-p_i)\mu_0 + p_i \mu_1$, where $p_i = \Pr[x_1 \in \bar{D}_i | s_{1:i-1}]$, we have
\begin{align*}
    &\Pr[\calB^{(i)}(s_{1:i-1}; D)  = s_i]
    \\
    & =
    (1-p_i) \Pr[\calA^{(i)}(s_{1:i-1}; x_i) = s_i | s_{1:i-1}, x_1 \notin \bar{D}_i]\\
    &\qquad + \frac{p_i}{\ell_i} \sum_{d \in x_i} \Pr[\calA^{(i)}(s_{1:i-1}; x_i \cup \{x_1\} \setminus \{d\}) = s_i | s_{1:i-1}, x_1 \in \bar{D}_i].
    \enspace
\end{align*}
For dataset $D'$, the corresponding probability distribution is $\mu' = (1-p_i' )\mu_0' + p_i' \mu_1'$, with $p_i', \mu_0', \mu_1'$ the corresponding quantities.

Next, we wish to show that $\mu$ and $\mu'$ are overlapping mixtures \cite{BBG18}.
For $i \in [n]$, $\mu_0= \mu'_0$ because $x_1$ is not involved in $\calB^{(i)}$.
Then, we move the $p_i'$ part of the first component of $\mu'$ to the second component:
$$\mu' = (1-p_i ) \mu_0 + p_i  \left(\frac{p_i'}{p_i} \mu_1' + \left(1 - \frac{p_i'}{p_i}\right)\mu_0\right)
= (1-p_i  ) \mu_0 + p_i  \mu_1'',$$
where, w.l.o.g, $p_i \geq p_i'$ has been assumed.
This indicates that $\mu$ and $\mu'$ are overlapping mixtures.

From the fact that $\calA_{ldp}$ is $\eps_0$-LDP, we have $\mu_0 \approxeq_{(\eps_0,0)} \mu_1$ and $\mu_0=\mu_0' \approxeq_{(\eps_0,0)} \mu_1'$, as well as $\mu_1 \approxeq_{(\eps_0,0)} \mu_1'$ (because $\mu \approxeq_{(\eps_0,0)} \mu'$ and $\mu_0=\mu_0'$). Therefore, by joint convexity, $\mu_1 \approxeq_{(\eps_0,0)} \mu_1''$.
With these, we get, from the advanced joint convexity of overlapping mixtures (Theorem 2 in \cite{BBG18}), 
$\mu \approxeq_{(\log\left(p_i (e^{\eps_0} - 1)+1\right),0)} \mu'.$

We proceed to bound $p_i$.
Since $\Pr[x_1 \in \bar{D}_i]=\frac{l_i}{n}$ for $i\in [n]$,
\begin{align*}
    p_i
    &=
    \Pr[x_1 \in \bar{D}_i | s_{1:i-1}]
    \\
    &=
    \frac{\Pr[s_{1:i-1} | x_1 \in \bar{D}_i] \Pr[x_1 \in \bar{x_i}]}{\Pr[s_{1:i-1}]}
    \\
    &=
    \frac{\ell_i}{n} \frac{\Pr[s_{1:i-1} | x_1 \in \bar{D}_i]}{\sum_{k \in [n]} \Pr[s_{1:i-1} | x_1 \in \bar{D}_k] \Pr[x_1 \in \bar{D}_k]}
    \\
    &=
    \frac{\ell_i}{\sum_{k \in [n]} \ell_k \frac{\Pr[s_{1:i-1} | x_1 \in \bar{D}_k]}{\Pr[s_{1:i-1} | x_1 \in \bar{D}_i]}}.
    \enspace
\end{align*}
Now, we note that for $k < i$, the subset $\bar{D}_{1:i-1}$ conditioned on $x_1 \in \bar{D}_k$ and the subset $\bar{D}_{1:i-1}$ conditioned on $x_1 \in \bar{D}_i$ differs in two positions at most.
This leads to $\frac{\Pr[s_{1:i-1} | x_1 \in \bar{D}_k]}{\Pr[s_{1:i-1} | x_1 \in \bar{D}_i]} \leq e^{-2\epsilon_0}$ for local randomizer satisfying $\eps_0$-LDP.
Similarly, for $k>i$, we have the two aforementioned subsets differing in one position at most, leading to $\frac{\Pr[s_{1:i-1} | x_1 \in \bar{D}_k]}{\Pr[s_{1:i-1} | x_1 \in \bar{D}_i]} \leq e^{-\eps_0}$.
Hence, we can write 
\begin{align*}
    \sum_{k \in [n]} \ell_k \frac{\Pr[s_{1:i-1} | x_1 \in \bar{D}_k]}{\Pr[s_{1:i-1} | x_1 \in \bar{D}_i]}
    &\geq
    \ell_i + e^{-2 \eps_0} \sum_{k < i} \ell_k + e^{-\eps_0} \sum_{k > i} \ell_k,
    \enspace
\end{align*}
which is larger than or equal to $e^{-2 \eps_0} n$.
The bound on $p_i$ is therefore $p_i \leq e^{2 \eps_0}l_i/ n$.
Furthermore, from the advanced joint convexity of overlapping mixtures, we get that $\calB^{(i)}$ is $\eps_i$-DP with $\eps_i \leq \log(1 + e^{2 \eps_0} (e^{\eps_0} - 1) \ell_i / n)$.

Now, we use Equation \ref{eqn:heta} to calculate the overall $(\eps,\delta)$ of $\calB^{(1)}, \ldots, \calB^{(n)}$:
\begin{align*}
    \eps
    &=
    \sum\limits_{i\in[n]} \frac{(e^{\eps_i} - 1) \eps_i}{e^{\eps_i} + 1} + \sqrt{2 \log{\frac{1}{\delta}} \sum\limits_{i\in[k]}\eps_i^2}
    \\
    &\leq
    \sum\limits_{i\in[n]}
    \frac{e^{2 \eps_0} (e^{\eps_0} - 1) \ell_i / n \cdot \log(1 + e^{2 \eps_0} (e^{\eps_0} - 1) \ell_i / n)}{2 + e^{2 \eps_0} (e^{\eps_0} - 1) \ell_i / n}
    \\
    &+
    \sqrt{2 \log\frac{1}{\delta} \sum\limits_{i\in[n]} \left(\log(1 + e^{2 \eps_0} (e^{\eps_0} - 1) \ell_i / n) \right)^2}
    \\
    &\leq
    \frac{(e^{\eps_0}-1)^2 e^{4 \eps_0} \ltwo{\ell}^2}{2 n^2}
    +
    \sqrt{\frac{2 (e^{\eps_0}-1)^2 e^{4 \eps_0} \ltwo{\ell}^2}{n^2} \log{\frac{1}{\delta}}},
\end{align*}
where we have used $\log(1+x) \leq x$ and $\frac{1}{2+x} \leq 1/2$ for $x\geq 0$ in the second line.

We next give a bound on $\ltwo{l}$.
Using Lemma \ref{lem:allocate}, we obtain with probability at least $1-\delta_2$,

\begin{align*}
    \frac{\ltwo{l}}{n} \leq \sqrt{(1-\frac{1}{n})\sum\limits_{i\in[n]} {P_i^G}^2} + \sqrt{\frac{\log(1/\delta_2)}{n}} \enspace.
\end{align*}
Note that $P^G$ can be upper bounded using Equation \ref{eq:pg_upper}.
Combining everything, we have 
\begin{align}
    \eps \leq \frac{(e^{\eps_0}-1)^2 e^{4 \eps_0}\eps_1^2} {2}
    +
    (e^{\eps_0}-1) e^{2 \eps_0}\eps_1\sqrt{2  \log{\frac{1}{\delta}}},
    \label{eq:eps_fix_puredp}
\end{align}
where $\eps_1 = \sqrt{(1-\frac{1}{n})\sum\limits_{i\in[n]} {P_i^G}^2} + \sqrt{\frac{\log(1/\delta_2)}{n}}$, as desired.

We next prove the case where the local randomizer is $(\eps_0,\delta_0)$-DP.
Using Lemma \ref{lem:cheu}, we know that for any $\calA$ satisfying $(\eps_0,\delta_0)$ with $\delta_0 \leq \frac{(1 - e^{-\eps_0}) \delta_1}{4 e^{\eps_0} \left(2 + \frac{\ln(2/\delta_1)}{\ln(1/(1-e^{-5\eps_0}))}\right)}$, there exists $\tilde{\calA}_{ldp}$ that is $8\eps_0$-DP for each element $x$ and satisfies $TV(\calA_{ldp}(x),\tilde{\calA}_{ldp}(x))\leq \delta_1$. 
A union bound over the whole dataset $D$ of size $n$ then gives a bound between $\calA_{fix}$ and $\tilde{\calA_{fix}}$:
$$TV(\calA_{fix}(D),\tilde{\calA}_{fix}(D))\leq n\delta_1.$$
Using the argument for the pure DP case, we see that an $8\eps_0$-DP local randomizer $\calA_{ldp}$ satisfies $(\eps',\delta)$-DP at the first index where $\eps'$ is given as in Equation \ref{eq:eps_fix_puredp} but with $\eps_0$ replaced with $8\eps_0$.
Finally, using Proposition 3 from \cite{DBLP:conf/icml/WangFS15}, we obtain that $\calA_{fix}$ satisfies $(\eps',\delta')$-DP with $\delta' = \delta + \delta_2 + n(e^{\eps'}+1)\delta_1$.
\end{proof}
One then applies joint convexity to the output distribution conditioned by $L$ to complete the proof of Theorem \ref{thm:all_protocol}.

\section{Conclusion}
In this work, we have rethought the shuffling mechanism and proposed network shuffling to show that privacy amplification without a trusted and centralized entity is possible.
While this work is largely theoretical, the research question is inspired by a practical observation: Can we achieve privacy amplification without a centralized, trusted entity due to known security issues and constraints? 

Let us point out several future directions extending our work.
Our privacy accounting may be further tightened with more advanced techniques. 
Besides, integrating other variants of walk like lazy random walk are of practical importance.
Note also that our scenario still requires a central authority to initiate the task of collecting data, determine the number of communication rounds and network topology, etc.
It may be possible to achieve full decentralization where the clients themselves set up the task collaboratively through the use of, e.g., a distributed ledger \cite{DBLP:journals/ftml/KairouzMABBBBCC21}.

As a final remark, let us emphasize that, although we are contributing to realizing privacy amplification with practicality in mind, there is still a lot to do for deploying real-world distributed computation systems, as we have revealed throughout this work.
We hope that our work can serve as an inspiration for researchers and practitioners to design new privacy mechanisms tackling other practical problems and invent new architectures as well as other novel methods to realize the desired privacy properties in the end-to-end systems.

\bibliographystyle{ACM-Reference-Format}
\balance
\bibliography{draft}

\end{document}